\documentclass[10pt, conference, letterpaper]{IEEEtran}

\usepackage{cite}
\usepackage{amsmath,amssymb,amsfonts}
\interdisplaylinepenalty=2500
\usepackage{algorithmicx}
\usepackage{array}
\ifCLASSOPTIONcompsoc
  \usepackage[caption=false,font=normalsize,labelfont=sf,textfont=sf]{subfig}
\else
  \usepackage[caption=false,font=footnotesize]{subfig}
\fi

\usepackage{url}
\usepackage{graphicx}
\usepackage[table,xcdraw]{xcolor}
\usepackage{amsthm}
\usepackage{svg}
\usepackage{bm}
\usepackage{algorithm}
\usepackage[noend]{algpseudocode}
\usepackage{setspace}



\usepackage[OT1,T1]{fontenc}
\usepackage[scaled=.88]{berasans}

\DeclareMathOperator*{\argmax}{argmax}

\algdef{SE}[DOWHILE]{Do}{doWhile}{\algorithmicdo}[1]{\algorithmicwhile\ #1}%
\newtheorem{theorem}{Theorem}


\algrenewcommand\algorithmicindent{1em}%

\hyphenation{op-tical net-works semi-conduc-tor}

\begin{document}

\title{Multi-Provider NFV Network Service Delegation via Average Reward Reinforcement Learning}

\author{
\IEEEauthorblockN{
	Bahador Bakhshi,
	Josep Mangues-Bafalluy, 
	Jorge Baranda
}
\IEEEauthorblockA{Centre Tecnologic de Telecomunicacions de Catalunya (CTTC), Spain\\
\{bbakhshi, jmangues, jbaranda\}@cttc.cat}
}
\maketitle

\begin{abstract}
In multi-provider 5G/6G networks, service delegation enables administrative domains to federate in provisioning NFV network services. Admission control is fundamental in selecting the appropriate deployment domain to maximize average profit without prior knowledge of service requests’ statistical distributions. This paper analyzes a general federation contract model for service delegation in various ways. First, under the assumption of known system dynamics, we obtain the \emph{theoretically} optimal performance bound by formulating the admission control problem as an infinite-horizon Markov decision process (MDP) and solving it through dynamic programming. Second, we apply reinforcement learning to \emph{practically} tackle the problem when the arrival and departure rates are not known. As Q-learning maximizes the discounted rewards, we prove it is not an efficient solution due to its sensitivity to the discount factor. Then, we propose the \emph{average reward} reinforcement learning approach (R-Learning) to find the policy that directly maximizes the average profit. Finally, we evaluate different solutions through extensive simulations and experimentally using the 5Growth platform. Results confirm that the proposed R-Learning solution always outperforms Q-Learning and the greedy policies. Furthermore, while there is at most 9\% optimality gap in the ideal simulation environment, it competes with the MDP solution in the experimental assessment.
\end{abstract}

\begin{IEEEkeywords}
Multi-Provider Service Delegation, Admission Control, MDP, Average Reward RL, Dynamic Programming
\end{IEEEkeywords}

\IEEEpeerreviewmaketitle

\section{Introduction}
\label{sec:Intro}

Service provisioning in 5G/6G networks is challenging in a context with diverse quality of service (QoS) requirements, heterogeneity of infrastructure resources and shrinking per-service revenues.
To cope with these challenges, innovative principles including network slicing, network softwarization using Software Defined Networking (SDN) and Network Function Virtualization (NFV), and multi-domain service orchestration have been proposed in the architecture of the networks
\cite{alliance20175g,virtualisation2018release, ITU2020Net2030}. In multi-domain service orchestration, multiple providers federate/collaborate in provisioning network services (NSs) consisting of Virtual Network Functions (VNFs) interconnected by virtual links
\cite{rosa2015md2, bhamare2017optimal, hortiguela2020realizing}. In \emph{multi-provider service delegation}, as a kind of multi-domain orchestration, the customer of the consumer domain (CD) requests an NS that is either deployed locally in the CD or delegated to the peering provider domain (PD). This is done transparently to the customer (i.e., CD decides) based on a CD-PD federation contract specifying the technical and business agreements between the domains \cite{valcarenghi2018framework}.
In this paper, we use multi-domain and multi-provider interchangeably.

Admission control (AC) is a key issue in multi-provider service delegation as it determines the service deployment domain \cite{antevski2020q}. Indeed, the AC of the CD 
makes the highest level of service-orchestration decisions, which directly impacts on the business profit of the CD domain. It decides how the local resources of the CD and the external resources, offered by the PD through the established federation contract, should be used for provisioning heterogeneous services with different revenues. 
The admission controller decisions get more important as the \emph{heterogeneity} of services increases, which is the case in 5G/6G networks, and consequently, different demands should be treated in different ways. Inappropriate decisions by the admission controller, that do not take the heterogeneity into account, will lead to QoS degradation and profit loss. Therefore, AC is a powerful tool for the service provider to aim for various objectives, like profit maximization, load balancing, or QoS guarantees \cite{baranda2020nfv}, by deploying each NS in the most appropriate domain. 

In this paper, we study the AC of the Multi-Provider Service Delegation (AC-MPSD) problem, where there is a CD and a PD. The federation contract established between domains specifies the service catalog they share, the resource quotas reserved in the PD for delegation purposes, and the per-service associated cost. Upon arrival of an NS request, without prior knowledge of future NS requests, the AC decides either to admit or to reject the request. In case of admission, it also determines the deployment domain. In this problem, the objective is to find an AC policy that maximizes the long-term average profit of the CD subject to the delegation cost. This problem is not fully addressed in the literature. In the last years, various approaches have been proposed for AC in 5G \cite{ojijo2020survey}; however, they cannot be applied directly to the AC-MPSD problem as they consider single domain networks.

Recently, AI/ML techniques have been extensively applied to networking problems \cite{xie2018survey, morocho2019machine, bega2019machine}. Approaching the AC-MPSD problem with AI/ML methods is also promising. In this problem, the admission controller decides for each NS request without prior knowledge of future demands, hence it is an instance of the \textit{sequential decision making under uncertainty} problem, which can be tackled  efficiently by Reinforcement Learning (RL)  \cite{sutton2018reinforcement}. While finding an optimal AC policy using RL has been studied previously \cite{wu2001admission, han2018markov, raza2018slice}, they cannot be applied to the AC-MPSD problem, as they were proposed in contexts other than multi-provider service delegation. 

Existing AC solutions
were mainly evaluated in simulation environments that raise questions about their performance and efficiency in real networks. In this paper, besides in extensive simulations, the proposed solution is evaluated in a realistic environment using the publicly available 5Growth platform
\footnote{https://github.com/5growth}~\cite{Xi_commag_2021}, an NFV/SDN-based orchestration framework with AI/ML capabilities to perform closed-loop automation and zero-touch service and network management. 

To sum up, to the best of our knowledge, despite the potential of the service delegation concept in the multi-provider 5G/6G networks and the crucial role of AC in this context, the AC-MPSD problem has not yet been studied in detail. In this paper,we extend our two preliminary works on this problem \cite{antevski2020q, bakhshi2021rl} and 
we make the following contributions:
\begin{itemize}
\item Using a flexible federation contract model, and by assuming known system dynamics, the problem is formulated as a Markov Decision Process (MDP). Its solution by the Dynamic Programming (DP) Policy Iteration (PI) algorithm provides the theoretical optimal AC policy.
\item The drawback of applying the widely-used Q-Learning algorithm to the AC-MPSD problem, due to the sensitivity to the discount factor is analytically proved and also numerically justified. 
\item A model-free \emph{average reward}-based RL algorithm is proposed as a practical solution to maximize the long-term average profit.
\item The proposed solution is implemented and evaluated in an experimental setup using the 5Growth orchestration platform as well as with extensive simulations that show a near-optimal performance.
\end{itemize}

The remainder of this paper is organized as follows. In Section \ref{sec:Related}, the related works are reviewed. The system model and problem statement are discussed in Section \ref{sec:Model_Problem}. In Section \ref{sec:Optimal}, the problem is formulated as an MDP and solved by PI. The RL approaches are discussed in Section \ref{sec:Reinforce}. In Section \ref{sec:Simulation}, the numerical results of  the simulations, as well as the experimental testbed results, are presented. Finally, in Section \ref{sec:Conclusion}, we conclude this paper.

\section{Related Work}
\label{sec:Related}
In this section, we review three categories of related work, i.e., ($i$) service federation in multi-domain networks, ($ii$) AC in 5G networks, and ($iii$)  RL-based AC solutions; and identify the differences between those studies and this paper.

Multi-domain orchestration is an inherent concept in 5G/6G network architecture \cite{alliance20175g, virtualisation2018release}, but its realization needs resolving research challenges and implementation issues \cite{rosa2015md2, hortiguela2020realizing}. From the theoretical point of view, the problem is formulated as optimization models, and due to the complexity of the problem, heuristic algorithms are proposed to find sub-optimal solutions in \cite{bhamare2017optimal, dietrich2017multi, sun2018service}. These preliminary works were extended later to consider more complex objective functions, e.g., energy efficiency \cite{sun2019energy, kaur2019energy}, and network service latency  \cite{zhang2019cost}. 
To tackle the complexity of the problem, a topology aggregation technique \cite{yan2020service} and a deep learning-based solution \cite{zhang2021intelligent} were also proposed recently.  From the implementation point of view, the architectural framework for multi-domain service orchestration \cite{toumi2021cross} and, more specifically, service federation \cite{valcarenghi2018framework, li2018service} were also studied. In the 5G-Transformer platform \cite{5GTrans, baranda20205g}, the service federation component was developed so that it is capable of deploying NSs spanning multiple domains and transport networks. While these theoretical studies and the practical developments address some aspects of the multi-domain service orchestration, they do not specifically consider the AC problem; i.e., they assume that the service has already been accepted and attempt to efficiently deploy it.

Admission control in 5G networks has been the topic of many studies in recent years \cite{ojijo2020survey}. Various objectives are aimed, including revenue maximization \cite{nejad2018vspace, bega2019machine} and fairness assurance \cite{han2019utility}. Different strategies, and techniques have been utilized to achieve these goals. The most straightforward approach is to \emph{greedily} attempt to accept any given NS request. However, to get closer to the optimal policy, techniques based on optimization theory \cite{nejad2018vspace}, and reinforcement learning \cite{bega2019machine} have also been proposed. In \cite{bega2017optimising}, the authors formulated the AC problem in the case of two inelastic and elastic traffic models as a Semi-Markov decision problem, and then obtained the optimal policy to maximize the revenue of the service provider. These solutions cannot be applied directly to the AC-MPSD problem, since they are proposed for single-domain networks, hence without service delegation.

RL is an adequate tool to deal with the AC problem through which the admission controller learns the appropriate policy via the rewards gained over time. For the first time, AC in multimedia networks was approached by RL in \cite{tong2000adaptive}. Later the authors extended the problem and dealt with the joint routing and admission control via RL in \cite{tong2002reinforcement}. 
Recently, the joint AC and routing problem in SDN has been investigated via approximate dynamic programming \cite{yang2016joint}. In wireless networks, RL-based AC solutions have also been proposed. In \cite{wu2001admission}, AC in cellular networks is formulated as an MDP. AC in CDMA networks using RL was studied in \cite{liu2005self}. Recently, in 5G networks, the network slice admission control problem is formulated as an MDP in \cite{han2018markov}, and RL-based solutions are proposed in \cite{raza2018slice, caballero2018network}. While these works approach the AC problem using RL, they are not applicable to the service delegation problem, since the contexts of those problems are quite different from multi-domain service orchestration.

In the most closely related work, Q-Learning was applied to a similar problem \cite{antevski2020q}. However, this paper differs from that one by considering a more general and flexible federation contract model, analytically investigating the limitation of Q-Learning, proposing an average reward-based RL algorithm, and implementing the solution using the 5Growth platform.

\section{System Model and Problem Statement}
\label{sec:Model_Problem}

\subsection{Assumptions and System Model}
\label{sec:Assumptions}
In this paper, we consider a multi-provider network, which is composed of a CD and a PD\footnote{The presented analyses and solutions in this paper can be extended to the scenarios with multiple PDs without substantial modifications; it is omitted for the sake of simplicity of presentation.}. Via the federation contract established between the domains, the CD, in addition to its own local resources, uses the reserved resource quotas in the PD to satisfy service provisioning requests. In these domains, there are $\mathcal{R}=\{1,2,\ldots, R\}$ types of resources. The total resources of the CD are denoted by vector $\bar{\bm{C}}^{l}=[\bar{C}^{l}_{1}, \ldots, \bar{C}^{l}_{R}]$ where $\bar{C}^{l}_{k}$ is the total amount of resource type $k$ in this domain; e.g., the total number of CPU cores. According to the federation contract, the PD also reserves $\bar{\bm{C}}^{p}=[\bar{C}^{p}_{1}, \ldots, \bar{C}^{p}_{R}]$ amount of resources for the delegated NSs. Define $\bm{C}^{l}$ and $\bm{C}^{p}$ as, respectively, the current available capacities of $\bar{\bm{C}}^{l}$ and $\bar{\bm{C}}^{p}$.

$\mathcal{I}=\{1,2,\ldots,I\}$ types of NSs are requested to the CD; each type $i \in \mathcal{I}$ is defined by tuple $(\bm{c}_{i},r_{i})$ in the service catalog, where $r_{i}$ is the revenue 
of the CD if an instance of the service is admitted; and $\bm{c}_{i} = [c_{i,1}, \ldots, c_{i,R}]$, where  $c_{i,k}$ is the total aggregated amount of resources type $k$ required by the VNFs of the NS, e.g., the total required CPU cores by all the VNFs. We assume that the arrival of the requests for NS type $i$ as well as the departure of those NSs are Poisson processes with average rate $\lambda_{i}$ and $\mu_{i}$ respectively. Therefore, an NS $\delta_{i}$ of type $i$ as far as timing is concerned is specified by $(\tau^{s}_{\delta}, \tau^{e}_{\delta}, i)$ where  $\tau^{s}_{\delta}$ and $\tau^{e}_{\delta}$ are respectively the arrival and departure time of the NS request that are determined by $\lambda_{i}$ and $\mu_{i}$, and $i$ is the type of the requested service.

In the considered model of the federation contract, even if the reserved quota $\bar{\bm{C}}^p$ in the PD is exceeded, NS requests can still be delegated but at an additional overcharged cost until the resource consumption by the delegated NSs exceeds a reject threshold. More specifically, in the federation contract, besides the $\bar{\bm{C}}^p$, three additional parameters are specified: ($i$) delegation fees $\bm{\Sigma} = [\sigma_{1}, \ldots, \sigma_{I}]$, ($ii$) overcharging scales $\bm{\Omega} = [\omega_{1}, \ldots, \omega_{I}] \geq \bm{1}$, and ($iii$) reject thresholds $\bm{\Theta} = [\theta_{1}, \ldots, \theta_{R}] \geq \bm{1}$. 
We define $\bar{\bm{C}}^{p}_{\theta} = \bm{\Theta}\otimes\bar{\bm{C}}^{p}$, which is the element-wise multiplication of vectors $\bm{\Theta}$ and $\bar{\bm{C}}^{p}$, and let ${\bm{C}}_{\theta}^p$ be the current available capacity of $\bm{\Theta}\otimes\bar{\bm{C}}^{p}$. Based on these parameters, the cost of delegating $\delta_{i}$ is as follows where the vectors are compared element-wise:
\begin{equation*}
\label{eq:Delta}
\Delta(\delta_{i}) = 
	\begin{cases}
	\sigma_{i}, & \text{if } \bm{c}_{i} \leq \bm{C}^{p} \\
	\omega_{i} \sigma_{i}, & \text{if } \exists \, k \text{ s.t. }  {c}_{i,k} > {C}^{p}_{k} \text{ and } \bm{c}_{i} \leq \bm{C}^{p}_{\theta}
	\end{cases}
\end{equation*}
Note that in the case of ${c}_{i,k} > {C}^{p}_{\theta,k}$ for some $k$, the NS cannot be deployed in the PD. Moreover, we assume that the delegation cost of an NS $\delta_{i}$ is determined by $\Delta(\delta_{i})$ at the arrival time $\tau^{s}_{\delta}$ and does not change later. Finally, in general, $\omega_{i}$ can be a function of $\bm{C}^{p}$, $\bm{C}^{p}_{\theta}$, $\bm{c}_{i}$ without affecting the problem formulation and the proposed solutions.

It is worth noting that this federation contract model has the flexibility to implement various pricing strategies, for example by setting $\bm{\Theta} = \bm{1}$, there won't be any overcharged request, i.e., $\delta_{i}$ will be rejected if $\exists \, k $ s.t. ${c}_{i,k} > {C}^{p}_{k}$; or if $\bm{\Theta}$ is sufficiently large, there won't be any rejection, i.e., PD always accepts delegated NSs but overcharges them.

\subsection{Problem Statement}
\label{sec:Problem}
In this paper, we study the following on-line AC-MPSD problem. There is a CD with capacity $\bar{\bm{C}}^l$ that provides $\mathcal{I}$ types of services. It established a federation contract $(\bar{\bm{C}}^{p}, \bm{\Sigma}, \bm{\Omega}, \bm{\Theta})$ with a PD. NS requests for each type $i$ arrive one-by-one at rate $\lambda_{i}$. Upon such arrival, the admission controller, without knowledge of the future requests, decides whether to ($i$) accept the NS request to be deployed in the CD, yielding profit $r_{i}$, which is possible only if $\bm{c}_{i} \leq {\bm{C}}^{l}$, or ($ii$) delegate it to the PD, only if $\bm{c}_{i} \leq {\bm{C}}_{\theta}^{p}$, that yields profit $r_{i} - \Delta(\delta_{i})$, or ($iii$) reject the NS request. The admitted NSs will depart the network at rate $\mu_{i}$ (defined per service type).

Define $\mathcal{D}_{t} = \{\delta_{i} \ \forall i \in \mathcal{I} \text{ s.t. } \tau^{s}_{\delta} \leq t\}$ as the set of NS requests arrived before time $t$. Let $\mathcal{L}_{t}$ be the set of the NSs $\delta_{i} \in \mathcal{D}_{t}$ deployed locally in the CD, and similarly $\mathcal{F}_{t}$ is the set of   delegated NSs to the PD. The AC-MPSD problem is
\begin{equation}
\label{eq:Profit}
\text{max }\lim_{t \rightarrow \infty} \frac{1}{|\mathcal{D}_{t}|}\sum_{i \in \mathcal{I}} \Big(\sum_{\delta_{i} \in \mathcal{L}_{t}} r_{i} + \sum_{\delta_{i} \in \mathcal{F}_{t}} \big(r_{i} - \Delta(\delta_{i})\big)\Big),
\end{equation}
subject to:

{\setlength{\abovedisplayshortskip}{-3.5mm}
\begin{equation}
\label{eq:ConsumerCapacity}
\sum_{i \in \mathcal{I}} \sum_{\delta_{i} \in \mathcal{L}_{t}} \bm{c}_{i} \leq \bar{\bm{C}}^{l}, \ \   \forall t, 
\end{equation}}
\begin{equation}
\label{eq:ProviderCapacity}
\sum_{i \in \mathcal{I}} \sum_{\delta_{i} \in \mathcal{F}_{t}} \bm{c}_{i} \leq \bar{\bm{C}}^{p}_{\theta},  \ \ \forall t,
\end{equation}
where \eqref{eq:Profit} is the long-term average profit of the CD; and \eqref{eq:ConsumerCapacity}
and \eqref{eq:ProviderCapacity} respectively satisfy the capacity constraint of the CD and PD. This problem cannot be solved by the traditional optimization theory techniques, e.g., integer programming, because the NS requests arrive one-by-one and all the required information is not available at the beginning. 

\section{Optimal Solution}
\label{sec:Optimal}
The AC-MPSD is an instance of the \textit{sequential decision making under uncertainty} problem where a decision-making agent takes a sequence of decisions in an uncertain environment. Each decision, besides the uncertain dynamics of the environment, changes the state of the environment and leads to a reward. The decision maker's objective is to maximize a cumulative long-term reward. Assuming that the dynamics of the environment are known in the form of transition probabilities between the states, a Markov Decision Process (MDP) is an efficient tool to model and solve the problem. In this section, we formulate the AC-MPSD problem as an MDP; then, utilize Dynamic Programming (DP) to find the optimal solution via the Policy Iteration (PI) algorithm.

\subsection{MDP Formulation}
\label{sec:MDP}
A finite MDP is defined by the tuple $(\mathcal{S}, \mathcal{A}, \mathfrak{P}, \mathfrak{R},\gamma)$. $\mathcal{S}=\{1,2,\ldots, S\}$ is the set of the states of the environment. As for the action set, $\mathcal{A} = \{\mathcal{A}(1),\mathcal{A}(2),\ldots,\mathcal{A}(S)\}$, where $\mathcal{A}(s)$ is the set of the actions that the decision-making agent is allowed to take in state $s$. $\mathfrak{R}(s,a)\!\!: \mathcal{S} \times \mathcal{A}(s) \rightarrow \mathbb{R}$ is the reward function that determines the reward of each action $a \in \mathcal{A}(s)$ in state $s$. Function $\mathfrak{P}(s,a,s')\!\!: \mathcal{S} \times \mathcal{A}(s) \times \mathcal{S} \rightarrow [0,1]$ determines the probability of transition from state $s$ to state $s'$ when taking action $a$ in state $s$. Finally, $\gamma$ is the reward discount factor that is discussed in the following. To formulate the AC-MPSD problem as an MDP, the sets $\mathcal{S}$ and $\mathcal{A}$, the functions $\mathfrak{R}$ and $\mathfrak{P}$, and the parameter $\gamma$ are specified as follows.

\subsubsection{States}
\label{sec:States}
The state of the environment is defined as 
\begin{equation}
\label{eq:State}
s = ({\bm{C}^l}, {\bm{C}}_{\theta}^p, \bm{l}, \bm{f}, \bm{d}).
\end{equation} 
In this definition, $\bm{l} = [l_{1}, l_{2}, \ldots, l_{I}]$ and $\bm{f} = [f_{1}, f_{2}, \ldots, f_{I}]$ are, respectively, the numbers of currently deployed NSs in the CD and PD for each type of service. 
Borrowing ideas from \cite{wu2001admission}, $\bm{d}$ is the vector $[d_{1}, d_{2}, \ldots, d_{I}]$, where arrival (departure) of an NS of type $i$ is indicated by $d_{i} = +1$ $(d_{i} = -1)$. 
Note that since no simultaneous events occur at the same time, only one entry of $\bm{d}$ is non-zero. The reason behind definition \eqref{eq:State} is to maintain the Markov property and, also, to include sufficient details of the environment for computing the transition probabilities, which are discussed in the following.

\subsubsection{Actions and Rewards}
\label{sec:Actions}
Four actions/decisions are defined in the AC-MPSD problem. Action \textsf{accept} corresponds to the local deployment of the requested NS in the CD. To deploy the NS in the PD, the admission controller takes action \textsf{delegate}. The NS request is rejected with no profit or penalty and no resource consumption, if the action \textsf{reject} is taken. Moreover, a dummy action \textsf{none} is also defined, which is only taken when an NS is departing the network. This is an artificial action used to derive the transition probabilities in a tractable way as explained in the following subsection.

All actions are not allowed in every state. Let $\mathcal{S}^{+}_{i} = \{s \in \mathcal{S} \text{ s.t. } d_{i} = +1\}$, i.e., the set of states with an NS request arrival of type $i$, and $\mathcal{S}^{-}_{i} = \{s \in \mathcal{S} \text{ s.t. } d_{i} = -1\}$, i.e., the set of states with an NS departure of type $i$. The set $\mathcal{A}(s)$ determines the valid actions in state $s$ as follows:
\begin{equation*}
\label{eq:As}
\mathcal{A}(s) \text{ includes }
\begin{cases}
\mathsf{reject} & \text{if } s \in \mathcal{S}^{+}_{i} \\
\mathsf{accept} & \text{if } s \in \mathcal{S}^{+}_{i} \text{ and } \bm{c}_{i} \leq \bm{C}^{l}\\
\mathsf{delegate} & \text{if } s \in \mathcal{S}^{+}_{i} \text{ and } \bm{c}_{i} \leq \bm{C}^{p}_{\theta}\\
\mathsf{none} & \text{if } s \in \mathcal{S}^{-}_{i}
\end{cases}
\end{equation*}
\textsf{reject} is always in $\mathcal{A}(s)$ for $\mathcal{S}^{+}_{i}$, but \textsf{accept} (\textsf{delegate}) is included only if the CD (PD) has available resources.

In the AC-MPSD problem, the reward is the profit obtained by deploying each NS request $\delta_{i}$; so, it is independent of the next state and only determined by the action taken in the state. More specifically,  the rewards of actions \textsf{reject} and \textsf{none} are $\mathfrak{R}(s,\mathsf{none}) = \mathfrak{R}(s, \mathsf{reject}) = 0$; if the action \textsf{accept} is taken, the reward is $\mathfrak{R}(s,\mathsf{accept}) =  r_{i}$ and in the case of \textsf{delegate}, it is  $\mathfrak{R}(s,\mathsf{delegate}) =  r_{i} - \Delta(\delta_{i})$.

\subsubsection{Transition Probabilities}
\label{sec:Probabilities}
In AC-MPSD, the state transition probabilities are determined by the arrival and departure rates of NS requests. In this section, under the assumption of known $\lambda_{i}$ and $\mu_{i}$ $\forall i \in \mathcal{I}$, we obtain the transition probabilities $\mathfrak{P}(s,a,s')$ $\forall s,s' \in \mathcal{S}$ and $\forall a \in \mathcal{A}(s)$.

The transition from $s$ to $s'$ takes place in two stages. First,
the action $a$ taken in state $s$ is \textit{immediately} applied to the environment that changes domain resources $\bm{C}^{l}$ or $\bm{C}_{\theta}^{p}$ as well as $\bm{l}$ or $\bm{f}$; now, we say the system is in the \textit{transient} state $\tilde{s}=({\bm{C}^{l}}', {\bm{C}_{\theta}^{p}}', \bm{l}', \bm{f'}, -)$. Then, in the second stage, an arrival or departure event $\bm{d}'$ occurs that leads to the new state $s'=(\tilde{s},\bm{d}')$. These transitions are independent; therefore, the transition probability from $s$ to $s'$ is  $\mathfrak{P}(s,a,s') = \text{Pr}(\tilde{s}\, | \, s,a) \times \text{Pr}(\bm{d}'\, | \, \tilde{s})$. $\text{Pr}(\tilde{s} \, | \, s,a)$ is the probability of transition to state $\tilde{s}$ if action $a$ is taken in $s$ and it equals 1 for the \textsf{reject}, \textsf{accept}, and \textsf{delegate} actions, as the changes in the environment due to these actions are deterministic. However, for the \textsf{none} action, the departing NS $\delta_{i}$ can be from the CD or the PD. The probability of the former event is ${l_{i}}/({l_{i} + f_{i}})$, while the latter happens with probability ${f_{i}}/({l_{i} + f_{i}})$.  $\text{Pr}(\bm{d}' \, | \, \tilde{s})$ is the probability that the environment generates events $\bm{d}'$, i.e., arrival or departure of a new NS $\delta_{j}$ while in state $\tilde{s}$. Due to the Poisson assumption for the arrival and departure rates of the events, this probability is computed according to the competing exponentials theorem, hence the probability of an event is equal to the rate of the event divided by the total rates of all possible events. In our case, the total rate of events in state $\tilde{s}$ is $\Lambda(\tilde{s}) + M(\tilde{s})$ which are, respectively, the total arrival and departure rates. The details of function $\mathfrak{P}(s,a,s')$ are explained in Algorithm \ref{alg:Pr}, where $\bm{e}_{i}$ is a vector with $1$ in entry $i$ and $0$ elsewhere.

\begin{algorithm}[t]
\caption{$\mathfrak{P}(s,a,s')$}
\label{alg:Pr}
\begin{spacing}{1.3}
\begin{small}
\begin{algorithmic}[1]
    \If {$a \in \{\mathsf{reject}, \mathsf{accept}, \mathsf{delegate}\}$}
        \State $\text{Pr}(\tilde{s}\, | \, s,a) \gets 1$
    \Else
        \If {$\exists \, i \text{ s.t. } {\bm{l}}' = \bm{l} - \bm{e_{i}}$}
            \Comment{Departure from CD}
            \State $\text{Pr}(\tilde{s}\, | \, s,a) \gets \frac{l_{i}}{l_{i} + f_{i}}$
        \Else {$\ \exists \, i \text{ s.t. } {\bm{f}}' = \bm{f} - \bm{e_{i}}$}
        \Comment{Departure from PD}
            \State $\text{Pr}(\tilde{s}\, | \, s,a) \gets \frac{f_{i}}{l_{i} + f_{i}}$
        \EndIf
    \EndIf
    \State $\Lambda(\tilde{s}) \gets \sum_{i \in \mathcal{I}} \lambda_{i}$
    \Comment{Total arrival rate in $\tilde{s}$}
    \State $M(\tilde{s}) \gets \sum_{i \in \mathcal{I}} (l'_{i} + f'_{i})\mu_{i}$
    \Comment{Total departure rate in $\tilde{s}$}
    \If {$\exists \, j \text{ s.t. } s' \in \mathcal{S}^{+}_{j}$}
            \Comment{Arrival of type $j$; $d_{j}'=+1$}
            \vspace{0.8mm}
            \State $\text{Pr}(\bm{d}' \, | \, \tilde{s}) \gets \frac{\lambda_{j}}{\Lambda(\tilde{s}) + M(\tilde{s})}$
    \Else {$\ \exists \, j \text{ s.t. } s' \in \mathcal{S}^{-}_{j}$}
            \Comment{Departure of type $j$; $d_{j}'=-1$}
            \vspace{0.8mm}
            \State $\text{Pr}(\bm{d}' \, | \, \tilde{s}) \gets \frac{(l'_{j}+ f'_{j})\mu_{j}}{\Lambda(\tilde{s}) + M(\tilde{s})}$
    \EndIf
	\State \Return $\text{Pr}(\tilde{s}\, | \, s,a) \times \text{Pr}(\bm{d}' \, | \, \tilde{s})$ 
\end{algorithmic}
\end{small}
\end{spacing}
\end{algorithm}

\subsubsection{Discount Factor}
\label{sec:Discount}
Solving an MDP means finding a policy $\pi(s)$ that determines the action $a \in \mathcal{A}(s)$ $\forall s \in \mathcal{S}$ in order to maximize the cumulative reward obtained over time, which is called the expected return. In \emph{infinite} horizon MDPs, as it is the case in AC-MPSD, every policy $\pi$ with $\mathfrak{R}(s,a=\pi(s)) > 0$ will lead to the \emph{total} return 
$\sum_{t = 0}^{\infty} \mathfrak{R}(s_{{t}},a_{{t}}=\pi(s_{{t}})) = \infty$ 
regardless of the action taken in each state; hence, it does not make sense to compare the goodness of policies in this case.For this reason, commonly, the reward is \textit{discounted} \cite{sutton2018reinforcement, mahadevan1996average}; and the expected discounted return for each time $\bar{t}$, 
\begin{equation}
\label{eq:DiscountedReward}
G_{\bar{t}} = \sum_{t = \bar{t}}^{\infty} \gamma^{t}\mathfrak{R}(s_{t},a_{t}=\pi(s_{t})),
\end{equation}
is optimized, where $\gamma \in [0,1)$ is the discount factor. This discounting not only makes sure $G_{\bar{t}=0} \ll \infty$, but also it determines the importance of the immediate rewards compared to future rewards. For example, $\gamma = 0$ means that only the  immediate reward $\mathfrak{R}(s_{\bar{t}},a_{\bar{t}})$ is taken into account, which corresponds to the \textit{greedy} policy that does not consider the future rewards in making decisions. 

The objective of the AC-MPSD problem, defined in \eqref{eq:Profit}, is indeed maximizing the \textit{average} reward, not the discounted expected return. However, it is known that by setting $\gamma \rightarrow 1$, maximizing \eqref{eq:DiscountedReward} approximates the average reward \cite{mahadevan1996average}; so, in the MDP formulation of AC-MPSD, we set $\gamma \approx 1$.

\subsection{Policy Iteration Algorithm}
\label{sec:PI}
In this section, the optimal policy $\pi^{*}$ is obtained by solving the MDP using dynamic programming. For policy $\pi$, we define \textit{state-value} $v_{\pi}(s) = \mathbb{E}_{\pi}[ G_{0} | s_{0} = s]$, which is  
the expected discounted return starting from state $s$. The Bellman optimality equation \cite{sutton2018reinforcement} states that for the optimal policy $\pi^{*}$, we have 
\begin{equation*}
\label{eq:Bellman}
v_{\pi^{*}}(s) = \max_{a \in \mathcal{A}(s)} \sum_{s' \in \mathcal{N}(s,a)} \mathfrak{P}(s, a, s')\Big(\mathfrak{R}(s,a) + \gamma v_{\pi^{*}}(s')\Big);
\end{equation*}
where $\mathcal{N}(s,a)$ is the set of the possible next states in case of taking action $a$ in state $s$. Having the optimal state values, the optimal policy is $\pi^{*} = \argmax_{a} v_{\pi^{*}}$.

The recursive equation $v_{\pi^{*}}(s)$ can be solved by iterative dynamic programming methods such as the Policy Iteration algorithm \cite{sutton2018reinforcement}. The main loop of this algorithm is composed of two other loops. In the \textit{policy evaluation} loop, it evaluates the given policy $\pi$  by updating the state values $v(s)$ as
\begin{equation*}
v(s) = \sum\limits_{s' \in \mathcal{N}(s,a)} \mathfrak{P}(s,a,s')\Big(\mathfrak{R}(s,a)+\gamma v(s')\Big),
\end{equation*}
until the values converge. In the \textit{policy improvement} loop, for all the states, it updates the policy as
\begin{equation*}
\pi(s) = \argmax\limits_{a} \sum\limits_{s' \in \mathcal{N}(s,a)} \mathfrak{P}(s,a,s')\Big(\mathfrak{R}(s,a)+\gamma v(s')\Big).
\end{equation*}
The main loop terminates when there is not any change in the policy that implies the current policy is the optimal policy satisfying the Bellman equation.

To apply the PI algorithm on the AC-MPSD problem, besides the transition probabilities given by Algorithm \ref{alg:Pr}, the set $\mathcal{N}(s,a)$ is also needed for each action in each state, which is obtained by Algorithm \ref{alg:NextState}. It first finds the possible transient states $\tilde{s}_{1}$ (and $\tilde{s}_{2}$) according to the action $a$, then arrival (and departure) events are included to generate the next state $s'$.

Although the PI algorithm achieves the optimal solution, it can only be used for theoretical performance analysis rather than as a practical solution, because of the following unrealistic assumptions.
First, it needs the transition probabilities $\mathfrak{P}(s,a,s')$. However, the exact statistical information of the arrival/departure rates of the NS requests is typically not known. Second, it needs all the states of the MDP, but this is impractical, as the number of the states grows exponentially with the size of the problem including $I$, $\bar{C}^{l}_{k}/c_{i,k}$, $\bar{C}^{p}_{k}/c_{i,k}$ and $\theta_{k}$. Third, it assumes the environment \textit{immediately} transits from $s$ to $\tilde{s}$ before the next event occurs, i.e., the instantiation and termination of NSs take zero time; but, in practice, as elaborated in the experimental implementation in Section \ref{sec:experiments}, the actual time is not zero. Therefore, the real system is not exactly the MDP and, consequently, PI does not necessarily provide the optimal policy. In the next section, we present the RL-based solution that does not require these assumptions.

\begin{algorithm}[t]
\caption{NextState$(s,a)$}
\label{alg:NextState}
\begin{spacing}{1.15}
\begin{small}
\begin{algorithmic}[1]
    \State $i \gets $ the index of $\bm{d}$ which is not zero 
	\If {$a = \mathsf{reject}$}
	\Comment{No update in the domains}
	    \State $\tilde{s}_{1} \gets (\bm{C}^{l}, \bm{C}^{p}_{\theta}, \bm{l}, \bm{f}, -)$
    \ElsIf {$a = \mathsf{accept}$}
    \Comment{Update the CD}
        \State $\tilde{s}_{1} \gets (\bm{C}^{l} - \bm{c}_{i}, \bm{C}^{p}_{\theta}, \bm{l} + \bm{e}_{i}, \bm{f}, -)$
    \ElsIf {$a = \mathsf{delegate}$}
    \Comment{Update the PD}
        \State $\tilde{s}_{1} \gets (\bm{C}^{l}, \bm{C}^{p}_{\theta}- \bm{c}_{i}, \bm{l}, \bm{f} + \bm{e}_{i}, -)$
    \ElsIf {$a = \mathsf{none}$}
    \Comment{Either CD or PD can be updated}
        \State $\tilde{s}_{1} \gets (\bm{C}^{l} + \bm{c}_{i}, \bm{C}^{p}_{\theta}, \bm{l} - \bm{e}_{i}, \bm{f}, -)$
        \State $\tilde{s}_{2} \gets (\bm{C}^{l}, \bm{C}^{p}_{\theta} + \bm{c}_{i}, \bm{l}, \bm{f} - \bm{e}_{i}, -)$        
    \EndIf
	\For {$\tilde{s} \in \{\tilde{s}_{1}, \tilde{s}_{2}\}$}
	    \For {$j \in \mathcal{I}$}
	        \State $\bm{d}' \gets \bm{e}_{j}$
	        \Comment{Arrival event per-service type}
	        \State $s_{1}' \gets (\tilde{s}, \bm{d}')$
	        \If {$l'_{j} + f'_{j} > 0 $}
    	        \State $\bm{d}' \gets -\bm{e}_{j}$
    	        \Comment{Departure only if any deployed NS}
    	        \State $s_{2}' \gets (\tilde{s}, \bm{d}')$
	        \EndIf
	        \State $\mathcal{N} \gets \mathcal{N} \cup \{s_{1}', s_{2}'\}$
	    \EndFor
	\EndFor
	\State \Return $\mathcal{N}$
\end{algorithmic}
\end{small}	
\end{spacing}
\end{algorithm}

\section{Reinforcement Learning-Based Solutions}
\label{sec:Reinforce}
Reinforcement Learning (RL) is an alternative approach to solve MDPs where the decision-making agent \textit{learns} the optimal policy via interaction with the environment. In this section, first, we analyze the problem of applying the commonly-used Q-Learning algorithm for the AC-MPSD problem, and then, we present the R-Learning algorithm.

\subsection{Q-Learning Drawback}
\label{sec:QL}
Q-Learning is a widely used RL technique to solve sequential decision making problems. It works based on \textit{action-value} function $q_{\pi}(s,a)$, %
which is the expected discounted return starting from state $s$, performing action $a$, and then following policy $\pi$ \cite{sutton2018reinforcement}. The Bellman optimality equation states that 
\begin{equation*}
\label{eq:BellmanAction}
q^{*}(s,a) = \sum_{s' \in \mathcal{N}(s,a)} \mathfrak{P}(s, a, s')\Big(\mathfrak{R}(s,a) + \gamma \max_{a' \in \mathcal{A}(s')}q^{*}(s',a')\Big),
\end{equation*}
and, consequently, the optimal policy is
\begin{equation}
\label{eq:OptimalPolicyAction}
\pi^{*}(s) = \argmax_{a \in \mathcal{A}(s)} q^{*}(s,a).
\end{equation}

To iteratively solve this equation, Q-Learning maintains a table $Q[s,a]$ that estimates the action value and is updated by interacting with the environment as follows \cite{sutton2018reinforcement}:
\begin{equation*}
\label{eq:QUpdate}
Q[s,a] \gets Q[s,a] + \alpha \Big(\mathfrak{R}(s,a) + \gamma \max_{a' \in \mathcal{A}(s')} Q[s',a'] - Q[s,a]\Big).
\end{equation*}
This update is based on the \emph{bootstrapping} and \emph{temporal difference} concepts.
By bootstrapping, the agent assumes that the expected return in the next state $s'$ is $\max_{a' \in \mathcal{A}(s')} Q[s',a']$. Thus, it obtains a new estimate of $Q[s,a]$ as $\mathfrak{R}(s,a) + \gamma \max Q[s',a']$. Then, the temporal difference between the current value of $Q[s,a]$ and the new estimate is used to update the $Q$ table by the learning rate $\alpha$.

While Q-Learning is a popular RL algorithm and it has been applied successfully in a wide range of finite horizon episodic problems, it was shown that the algorithm cannot find the optimal solution in (some kinds of) infinite horizon MDPs due to maximizing the discounted reward  \cite{schwartz1993reinforcement}. The AC-MPSD problem is also an infinite horizon MDP, and the following theorem pinpoints the drawback of using Q-Learning for the problem. To the best of our knowledge, it is the first time of such a proof is provided for a non-artificial MDP.

\begin{theorem}
\label{theo:Gamma}
Let Pr$(a \, | \, s)$ be the probability of taking action $a$ in state $s$; define $f(\gamma)$ = Pr$(\mathsf{delegate} \, | \, s)$ $ - $ Pr$(\mathsf{accept} \, | \, s)$. First, $f(0) \leq 0$; second, $\exists \, \mathcal{S}' \subset \mathcal{S}$ where $\{\mathsf{delegate}$, $\mathsf{accept}\}$ $\in \mathcal{A}(s)$ 
$\forall s\in \mathcal{S}'$ and $f(\gamma)$ is an increasing function of $\gamma$.
\end{theorem}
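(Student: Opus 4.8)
The natural reading of the statement is that $\text{Pr}(a\,|\,s)$ is the probability of action $a$ under the behaviour policy that the $\gamma$-discounted action-value table induces (the deterministic argmax of \eqref{eq:OptimalPolicyAction}, or an $\epsilon$-greedy/softmax variant of it), which is exactly why it depends on $\gamma$; both assertions concern states $s$ with $\{\mathsf{accept},\mathsf{delegate}\}\subseteq\mathcal{A}(s)$. For the first claim, set $\gamma=0$: the return \eqref{eq:DiscountedReward} collapses to the immediate reward, so actions are ranked by $\mathfrak{R}(s,\cdot)$ alone. For $s\in\mathcal{S}^{+}_{i}$ with $\mathsf{accept}$ valid we have $\mathfrak{R}(s,\mathsf{accept})=r_{i}>r_{i}-\Delta(\delta_{i})=\mathfrak{R}(s,\mathsf{delegate})$ because $\Delta(\delta_{i})\in\{\sigma_{i},\omega_{i}\sigma_{i}\}>0$, and $\mathfrak{R}(s,\mathsf{reject})=0<r_{i}$; hence $\mathsf{delegate}$ is strictly dominated by $\mathsf{accept}$, so $\text{Pr}(\mathsf{delegate}\,|\,s)\le\text{Pr}(\mathsf{accept}\,|\,s)$, i.e. $f(0)\le 0$ (equal to $-1$ for the deterministic policy).

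For the second claim I would exhibit $\mathcal{S}'$ explicitly around a ``cheap'' type $i$ and an ``expensive'' type $j$ with $r_{j}\gg r_{i}$: let $\mathcal{S}'$ be the set of arrival states $s\in\mathcal{S}^{+}_{i}$ such that (a) $\bm{c}_{i}\le\bm{C}^{l}$ and $\bm{c}_{i}\le\bm{C}^{p}_{\theta}$, so both actions are valid; (b) $\bm{c}_{j}\le\bm{C}^{l}$ but $\bm{c}_{j}\not\le\bm{C}^{l}-\bm{c}_{i}$, so admitting the type-$i$ request locally forfeits a later \emph{local} type-$j$ admission while delegating it does not; and (c) $r_{i}>\Delta(\delta_{i})$ and the PD is provisioned generously enough that the type-$j$ continuation stays feasible after the delegation, so $\mathsf{delegate}$ also dominates $\mathsf{reject}$ for $\gamma$ near $1$. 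A catalog/contract with $\bm{c}_{j}$ large enough makes $\mathcal{S}'$ nonempty. Writing $q_{\gamma}$ for the optimal discounted action-value and using $\mathfrak{R}(s,\mathsf{accept})-\mathfrak{R}(s,\mathsf{delegate})=\Delta(\delta_{i})$,
\begin{equation*}
g(\gamma):=q_{\gamma}(s,\mathsf{delegate})-q_{\gamma}(s,\mathsf{accept})=-\Delta(\delta_{i})+\gamma\big(\mathbb{E}[v_{\gamma}(s')\,|\,s,\mathsf{delegate}]-\mathbb{E}[v_{\gamma}(s')\,|\,s,\mathsf{accept}]\big),
\end{equation*}
the expectations taken under $\mathfrak{P}$ (Algorithm~\ref{alg:Pr}); in particular $g(0)=-\Delta(\delta_{i})<0$, re-deriving Part 1 on $\mathcal{S}'$, and $\mathsf{delegate}$ is the chosen action precisely when $g(\gamma)>0$, so $f$ has the sign of $g$.

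The MDP is communicating — the empty configuration is reachable to and from every state of \eqref{eq:State}, and the $\bm{d}$-component is re-randomised by every event — so the optimal gain $\rho^{*}$ is state-independent and the Laurent expansion $v_{\gamma}(\cdot)=\rho^{*}/(1-\gamma)+h(\cdot)+o(1)$ holds as $\gamma\to1$ with $h$ the bias; this is the very regime in which \eqref{eq:DiscountedReward} approximates \eqref{eq:Profit}. The common pole cancels in $g$, so $\lim_{\gamma\to1}g(\gamma)=-\Delta(\delta_{i})+\big(\mathbb{E}[h(s')\,|\,s,\mathsf{delegate}]-\mathbb{E}[h(s')\,|\,s,\mathsf{accept}]\big)$, and I would bound the bias gap from below by (probability that a type-$j$ request arrives before the just-placed type-$i$ instance departs, governed by $\lambda_{j}$ and $\mu_{i}$) times (the marginal bias of a free local type-$j$ slot, which grows proportionally to $r_{j}$); taking $r_{j}$ large makes this exceed $\Delta(\delta_{i})$, hence $\lim_{\gamma\to1}g(\gamma)>0$. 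By continuity of $g$ on $[0,1]$ there is a threshold $\gamma_{0}\in(0,1)$ beyond which $\mathsf{delegate}$ is optimal, so $f$ rises from $-1$ to $+1$. To upgrade ``rises once'' to ``increasing'' I would show the continuation term $\gamma\,(\mathbb{E}[v_{\gamma}\,|\,\mathsf{delegate}]-\mathbb{E}[v_{\gamma}\,|\,\mathsf{accept}])$ is nonnegative and nondecreasing in $\gamma$ on $\mathcal{S}'$ — it is the $\gamma$-discounted stream of the extra high-value admissions that delegation preserves — which makes $g$, and therefore $f$, increasing; for a softmax behaviour policy over $\{\mathsf{accept},\mathsf{delegate}\}$ one has $f(\gamma)=\tanh(g(\gamma)/2\tau)$, monotone in $g$, giving the same conclusion.

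The main obstacle is this last monotonicity: since $\{v_{\gamma}\}$ is an infinite-horizon fixed-point family, proving the continuation gap is monotone in $\gamma$ (rather than merely sign-changing) requires uniform control of that family. I would circumvent it by placing $\mathcal{S}'$ inside a minimal embedded instance — two service types and a CD that fits exactly one service — in which $g(\gamma)$ becomes an explicit rational function of $\gamma$ whose derivative is visibly positive on $[0,1)$; and should a clean closed form prove elusive, the weaker but already sufficient statement ``$\exists\,\gamma_{0}$ with $f(\gamma_{0})\le0<\lim_{\gamma\to1}f(\gamma)$'' follows from Part 1 and the $\gamma\to1$ limit alone, and is exactly the discount-factor sensitivity that the subsequent numerical results illustrate.
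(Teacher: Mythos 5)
Your construction for Part 2 is the same one the paper uses: a state in which locally accepting the type-$i$ request blocks a later \emph{local} deployment of a high-value type $j$ (forcing it to be delegated at cost $\Delta(\delta_{j})$), whereas delegating the type-$i$ request preserves that local slot; and your Part 1 argument ($\mathfrak{R}(s,\mathsf{accept})-\mathfrak{R}(s,\mathsf{delegate})=\Delta(\delta_{i})>0$ at $\gamma=0$) is likewise the paper's. Where you diverge is the object you analyze. The paper works at the level of a \emph{single Q-learning temporal-difference update}: it freezes $\bar{Q}[s,\mathsf{delegate}]=\bar{Q}[s,\mathsf{accept}]$, writes $g(\gamma)=\alpha\bigl(-\Delta(\delta_{i})+\gamma\,(Q'[s,\mathsf{delegate}]-Q'[s,\mathsf{accept}])\bigr)$ with $Q'[s,a]=\max_{a'}\bar{Q}[s',a']$, and then evaluates the bootstrapped continuation values on the blocking scenario as geometric sums, so the continuation gap is explicitly $\sum_{t}\gamma^{t}\Delta(\delta_{j})$ and $g$ is manifestly increasing in $\gamma$. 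You instead analyze the exact discounted fixed point $q_{\gamma}$ and its $\gamma\to 1$ Laurent/bias expansion, which is a more principled formulation but is precisely what creates the monotonicity obstacle you flag: the continuation values themselves depend on $\gamma$, so "increasing" no longer falls out for free. In other words, the gap you honestly identify is an artifact of your (stronger) framing; the paper's proof dissolves it by construction, at the price of proving a statement about the update dynamics rather than about the optimal $q_{\gamma}$. Your fallback --- exhibiting only a sign change $f(\gamma_{0})\le 0<\lim_{\gamma\to 1}f(\gamma)$, or computing $g$ in a minimal two-type instance where it is an explicit rational function --- would in fact be no less rigorous than what the paper does, since the paper's own claim that $g(\gamma)>0$ for all $\gamma>0$ only holds once $\frac{\gamma}{1-\gamma}\Delta(\delta_{j})$ exceeds $\Delta(\delta_{i})$, i.e., for $\gamma$ large enough. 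So: same key lemma and same witness set $\mathcal{S}'$, different level of formalization; to match the paper you should restate your argument in terms of one TD update with fixed bootstrap values, and then monotonicity of $f$ in $\gamma$ is immediate.
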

\begin{proof}
The proof is given in the Appendix.
\end{proof}

We have the following corollaries from the theorem, which are also justified by the simulation results in Section \ref{sec:Simulation}:
\begin{itemize}
\item If $\gamma = 0$ then Pr$(\mathsf{delegate}\, | \, s) \ngtr $ Pr$(\mathsf{accept} \, | \, s)$, so the agent always prefers to deploy NSs in the CD rather than in the PD; i.e., it follows the sub-optimal \emph{greedy} policy. 
\item When the CD has sufficient resources, 
obviously the optimal policy is $\pi(s) = \mathsf{accept}$ $\forall s \in \mathcal{S}$, but $\gamma \approx 1$ implies the existence of  $\mathcal{S}' \subset \mathcal{S}$ such that Pr$(\mathsf{delegate} \, | \, s) > $ Pr$(\mathsf{accept} \, | \, s)$  $\forall s \in \mathcal{S}'$, leading to a sub-optimal policy.
\end{itemize}

Therefore, neither $\gamma \rightarrow 0$ nor $\gamma \rightarrow 1$ is the optimal setting for all configurations. In fact, as seen in the simulation results, the optimal value of $\gamma$ depends on the $\bar{\bm{C}}^{l}$, $\bar{\bm{C}}^{p}$ as well as on $\lambda_{i}$ and $\mu_{i}$ which are not known beforehand. As mentioned, the root of the problem is that Q-Learning finds the policy to maximize the discounted reward instead of the true objective of the AC-MPSD problem stated in Equation \eqref{eq:Profit}, that is \emph{average reward} maximization. In the next section, we use another reinforcement learning algorithm that directly optimizes the average reward.
 
\subsection{Average Reward RL}
\label{sec:RL}
An alternative solution to tackle the infinite accumulated reward issue is to maximize the average reward \eqref{eq:AverageReward} instead of the discounted reward \eqref{eq:DiscountedReward}:
\begin{equation}
\label{eq:AverageReward}
G_{\bar{t}} = \lim_{T \rightarrow \infty} \frac{1}{T}\sum_{t = \bar{t}}^{T} \mathfrak{R}(s_{t},a_{t}).
\end{equation}
To this end, define $T$-step state-value function for policy $\pi$ as  
\begin{equation}
\label{eq:TStepStateValue}
\tilde{v}_{\pi}^{T}(s) = \mathbb{E}_{\pi}\bigg[\sum_{t = 0}^{T}  \mathfrak{R}(s_{t},\pi(s_{t})) \  \Big | \ s_{0} = s\bigg],
\end{equation}
and the \emph{average return} of the policy $\pi$ as
\begin{equation}
\label{eq:AverageStateValue}
\rho_{\pi}(s) = \lim_{T \rightarrow \infty} \frac{\tilde{v}^{T}_{\pi}(s)}{T}.
\end{equation}
It is proved that in ergodic unichain MDPs, which is the case for the AC-MPSD problem, the average return for a given policy $\pi$ is independent of the state \cite{mahadevan1996average}, i.e., $\rho_{\pi}(s_{1}) = \rho_{\pi}(s_{2}) = \rho_{\pi}$ $\forall s_{1}, s_{2} \in \mathcal{S}$ where $\rho_{\pi}$ is the average return by policy $\pi$. This is the key observation in the development of the iterative algorithm, named R-Learning \cite{schwartz1993reinforcement}, to find the optimal policy for maximizing the average reward. 

Define $\rho^{*}$ as the maximum value of the average return of the MDP, which is not known in advance, and let $\rho$ be the current estimate of $\rho^{*}$ by the algorithm. In the R-Learning algorithm, through interactions with the environment in a number of episodes, the objective is to find a policy $\pi$ such that $\rho_{\pi} \approx \rho^{*}$. However, the problem of using \eqref{eq:TStepStateValue} to find the optimal policy is that it is possible to have two policies $\pi_{1}$ and $\pi_{2}$ such that  $\tilde{v}_{\pi_{1}}^{T}(s) > \tilde{v}_{\pi_{2}}^{T}(s)$ for a $T \ll \infty$ and some $s$.
In this case, naturally, $\pi_{1}$ should be preferred over $\pi_{2}$, but because of $\lim_{T \rightarrow \infty} \frac{1}{T}$ in definition \eqref{eq:AverageStateValue}, we have $\rho_{\pi_{1}} = \rho_{\pi_{2}}$ that implies no preference between the policies \cite{schwartz1993reinforcement, mahadevan1996average}. To solve this problem, in this context, the \textit{relative} (or bias) 
action-value function is defined as follows:
\begin{equation*}
\label{eq:ActionAverageValue2}
q_{\pi}(s,a) = \mathbb{E}_{\pi}\bigg[\lim_{T \rightarrow \infty} \sum_{t = 0}^{T} \Big(\mathfrak{R}(s_{t},\pi(s_{t})) - \rho_{\pi}\Big)\Big |\ s_{0} = s, a_{0} = a \bigg]
\end{equation*}
which can be seen as the relative gain of action $a$ in state $s$ compared to the average reward of the policy. By these definitions, for the \textit{bias-optimal} policy $\pi^{*}$ where $\rho^{*} \geq \rho_{\pi}$, we have  $q_{\pi^{*}}(s,a) \geq q_{\pi}(s,a)$ $\forall s \in \mathcal{S}$ $\forall a \in \mathcal{A}(s)$ \cite{mahadevan1996average}.
 
The R-Learning algorithm approximates the bias-optimal policy as follows. The algorithm starts from initial estimates of the average reward $\rho$ and the $Q$ table. Then, in $n$ episodes, each one composed of $m$ NS requests, it interacts with the environment and computes how much the reward of the action can be better than the estimated average return. 
More specifically, the agent, using an exploration strategy, such as $\epsilon$-greedy \cite{sutton2018reinforcement}, takes action $a$ in state $s$ and observes reward $\mathfrak{R}(s,a)$, then it computes an estimate of the action gain as 
\begin{equation*}
\label{eq:ActionAverageValue}
(\mathfrak{R}(s,a) - \rho) + \max_{a' \in \mathcal{A}(s')} Q[s',a'],
\end{equation*}
where the first term is the relative immediate gain of the action, and the second term, by bootstrapping, is the expected relative gain obtained in the future states following the same policy. Finally, similar to Q-Learning, the R-Learning algorithm also utilizes the temporal difference concept and updates the action value by a learning rate $\alpha$ as follows

\vspace{-2mm}
\small
\begin{equation*}
\label{eq:ActionAverageValueUpdate}
Q[s,a] \gets Q[s,a] + \alpha \Big((\mathfrak{R}(s,a) - \rho) + \max_{a' \in \mathcal{A}(s')} Q[s',a'] - Q[s,a]\Big).
\end{equation*}
\normalsize

Having $Q[s,a]$, the policy will be $\pi(s) \gets \text{argmax}_{a} Q[s,a]$ $\forall s \in \mathcal{S}$. But the $\rho$ used in this equation is not known in advance. Different approaches have been considered to learn it over time \cite{dewanto2020average}. Here, we use the following rule proposed in \cite{schwartz1993reinforcement} to update $\rho$ as
\begin{equation*}
\label{eq:AverageRewardUpdate}
\rho \gets \rho + \beta \Big(\mathfrak{R}(s,a) - \max_{\bar{a}} Q[s,\bar{a}] + \max_{a'} Q[s',a'] - \rho\Big),
\end{equation*}
where $\beta$ is the learning rate and $\mathfrak{R}(s,a) - \max_{\bar{a}} Q[s,\bar{a}] + \max_{a'} Q[s',a']$ is the new estimate of $\rho$ in case of taking action $a$ in state $s$. As explained in \cite{schwartz1993reinforcement}, to avoid the influence of the random actions by the exploration strategy, this update is only applied if action $a$ agrees with the policy. The details of the R-Learning algorithm are presented in Algorithm \ref{alg:RL}, where the hyperparameters $\alpha$, $\epsilon$, and $\beta$ are decayed with rate $\phi$ by the decaying function $\mathfrak{D}(\bar{x}, \phi) = \frac{\bar{x}}{1 + \phi \times i}$
at the beginning of episode $i$, and $Q[s, a]$ and $\rho$ are updated as discussed.

\begin{algorithm}[t]
\caption{R-Learning($n$, $m$, $\bar{\alpha}$, $\bar{\beta}$, $\bar{\epsilon}$, $\phi$)}
\label{alg:RL}
\begin{spacing}{1.05}
\begin{small}
\begin{algorithmic}[1]
	\State Arbitrary initialize $Q[s,a] \in \mathbb{R}$  $\forall s \in \mathcal{S}$, $\forall a \in \mathcal{A}(s)$
	\State $\rho \gets 0$
	\For {$n$ episodes}
		\State $\alpha \gets \mathfrak{D}(\bar{\alpha}, \phi)$,  $\epsilon \gets \mathfrak{D}(\bar{\epsilon}, \phi)$, $\beta \gets \mathfrak{D}(\bar{\beta}, \phi)$
		\State Reinitialize the environment
		\State $s \gets $ environment state $(\bar{\bm{C}}^{l},  \bar{\bm{C}}^{p}_{\theta}, \bm{0}, \bm{0}, \bm{d})$
		\For {$m$ NS requests}
			\State $a \gets $ action by an exploration strategy with parameter $\epsilon$
			\State Send action $a$ and the NS request to the environment
			\State Observe the outcome $s'$  and $\mathfrak{R}(s,a)$
			\State 
			{\footnotesize
			$Q[s,a] \gets Q[s,a] + \alpha \Big(\big(\mathfrak{R}(s,a) - \rho\big) +  \max\limits_{a'} Q[s',a'] - Q[s,a]\Big)$
			}
			\If {$Q[s,a] = \max\limits_{\bar{a}}Q(s,\bar{a})$}
				\State $\rho \gets \rho + \beta \Big(\mathfrak{R}(s,a) - \max\limits_{\bar{a}}Q[s,\bar{a}] + \max\limits_{a'}Q[s',a'] - \rho \Big)$
			\EndIf
			\vspace{-1mm}
			\State $s \gets s'$
		\EndFor
	\EndFor
	\State \Return $\pi \gets \argmax\limits_{a} Q[s,a]$ $\forall s \in \mathcal{S}$
\end{algorithmic}
\end{small}
\end{spacing}
\end{algorithm}

\section{Numerical Results}
\label{sec:Simulation}
In this section, after explaining the simulation setup, we investigate the performance of the dynamic programming, RL, and greedy algorithms via extensive simulations as well as through experimental assessment using the 5Growth platform.

\subsection{Simulation Setup}
\label{sec:SimSetup}
The default settings of the simulation parameters are summarized in Table \ref{tab:Simulation}, where to make the simulation scenarios more generic, we use term ``unit'' instead of specific metrics like CPU core, or \$. Moreover, in these simulations, we set $\omega_{i} = \omega$ $\forall i \in \mathcal{I}$ and $\theta_{k} = \theta$ $\forall k \in \mathcal{R}$. The performance of five practical algorithms are evaluated in comparison to the theoretical optimal solution obtained through dynamic programming (\textsf{PI}). These algorithms are the R-Learning, labeled as \textsf{RL}, Q-Learning with $\gamma = 0.20$, $\gamma = 0.55$, and $\gamma = 0.95$, which are respectively labeled as \textsf{QL-20}, \textsf{QL-55}, and \textsf{QL-95}; and also the greedy policy, labeled as \textsf{Greedy}. The greedy AC takes the \textsf{delegate} action only if there are not sufficient resources in the CD, and rejects NS requests only if there are not sufficient resources in the CD and PD. 

\begin{table} 
\begin{center}
\caption{Simulation parameter settings}
\vspace{-2mm}
\label{tab:Simulation}
\centering
\small\addtolength{\tabcolsep}{0pt}
\scalebox{0.9}{
\begin{tabular}{|c|c|}
  \hline
  Parameters & Values (unit) \\
  \hline
  \hline
  Number of resource and NS types: $R$, $I$ & 3, 3 \\
  \hline
  Consumer domain resources: $\bar{\bm{C}}^{l}$ & [30, 25, 30] \\
  \hline
  Provider domain resources: $\bar{\bm{C}}^{p}$ & [10, 15, 25] \\
  \hline
  NS type 1: $\bm{c}_{1}, r_{1}, \sigma_{1}$ & [4, 2, 1], 95, 80 \\
  \hline 
  NS type 2: $\bm{c}_{2}, r_{2}, \sigma_{2}$ & [2, 3, 2], 85, 40 \\
  \hline 
  NS type 3: $\bm{c}_{3}, r_{3}, \sigma_{3}$ & [2, 2, 4], 50, 5 \\
  \hline 
  NS 1 traffic rates: $\lambda_{1}, \mu_{1}$ & 10, 4 \\
  \hline
  NS 2 traffic rates: $\lambda_{2}, \mu_{2}$ & 11, 2 \\
  \hline
  NS 3 traffic rates: $\lambda_{3}, \mu_{3}$ & 12, 0.75 \\
  \hline
  Overcharging settings: $\theta$, $\omega$ & 2, 2 \\
  \hline
  Q-Learning hyperparameters: $\bar{\epsilon}$, $\bar{\alpha}$, $\phi$ & 1.0, 1.0, 0.025 \\
  \hline
  R-Learning hyperparameters: $\bar{\epsilon}$, $\bar{\alpha}$, $\bar{\beta}$, $\phi$ & 1.0, 1.0, 1.0, 0.025 \\
  \hline
  Learning parameters: $n,m$ & 2500, 4000 \\
  \hline
  \end{tabular}	
}		
\end{center}
\vspace{-3mm}
\end{table}

The overall evaluation procedure is as follows. In each experiment, the optimal policy is found through the PI algorithm. Then, Q-Learning and R-Learning are trained in $n$ episodes with $m$ random NS requests and the final policy is saved. Finally, a set $\mathcal{D}$ of $m$ NS requests is generated and the algorithms are applied to the set. This procedure is repeated 20 times for each setting and the average results are reported. 

In the following, two graphs are presented for each simulation. The first one is the optimality gap of each algorithm \textsf{Alg}, which is defined as: $Gap(\mathsf{Alg}) =\big(AP({\mathsf{PI}}) - AP({\mathsf{Alg}})\big)/AP({\mathsf{PI}})$, where $AP(\mathsf{Alg})$ is the average profit in Equation \eqref{eq:Profit} obtained by algorithm \textsf{Alg}. Moreover, to provide deeper insights on the operation of the algorithms, in each simulation, either the request acceptance rate $AR(\mathsf{Alg}) = |\mathcal{L_{\mathsf{Alg}}}| / |\mathcal{D}|$ or the Delegation rate $DR(\mathsf{Alg}) = |\mathcal{F_{\mathsf{Alg}}}| / |\mathcal{D}|$ 
is also reported, where $\mathcal{L}_{\mathsf{Alg}}$ and $\mathcal{F}_{\mathsf{Alg}}$ are the sets of the NSs deployed in the CD and the PD by algorithm \textsf{Alg}, respectively. 

\subsection{Learning Capability}
\label{sec:LearnCap}
In this section, we evaluate the performance of the RL algorithms to learn the optimal policy. To this end, Figure \ref{fig:EpisodeResults} compares the performance of the algorithms against the optimal solution with respect to the number of episodes $n$. The optimality gap shows that the RL algorithms are capable of \textit{learning} the optimal policy, as they approximate the optimal solution by increasing the number of episodes. However, the learning capability is different. R-Learning not only learns a better policy, but also achieves it sooner. Moreover, it can exploit the information more efficiently, i.e., while the gap of Q-Learning does not improve after 2000 episodes, the gap of \textsf{RL} continues to decrease when increasing $n$. 

The acceptance rates in Figure \ref{fig:EpisodeResults} show how the algorithms learn the policy. For small values of $n$, the algorithms do not explore the state space sufficiently, and consequently, there are a significant number of states for which the optimal decision is not found. By increasing the number of episodes, the algorithms discover more states wherein accepting the NS requests yields higher long-term profit. 

\begin{figure}[t]
\vspace{-3mm}
    \subfloat[Optimality Gap]{
        \hspace{-3mm}
        \includegraphics[width=0.5\linewidth]{./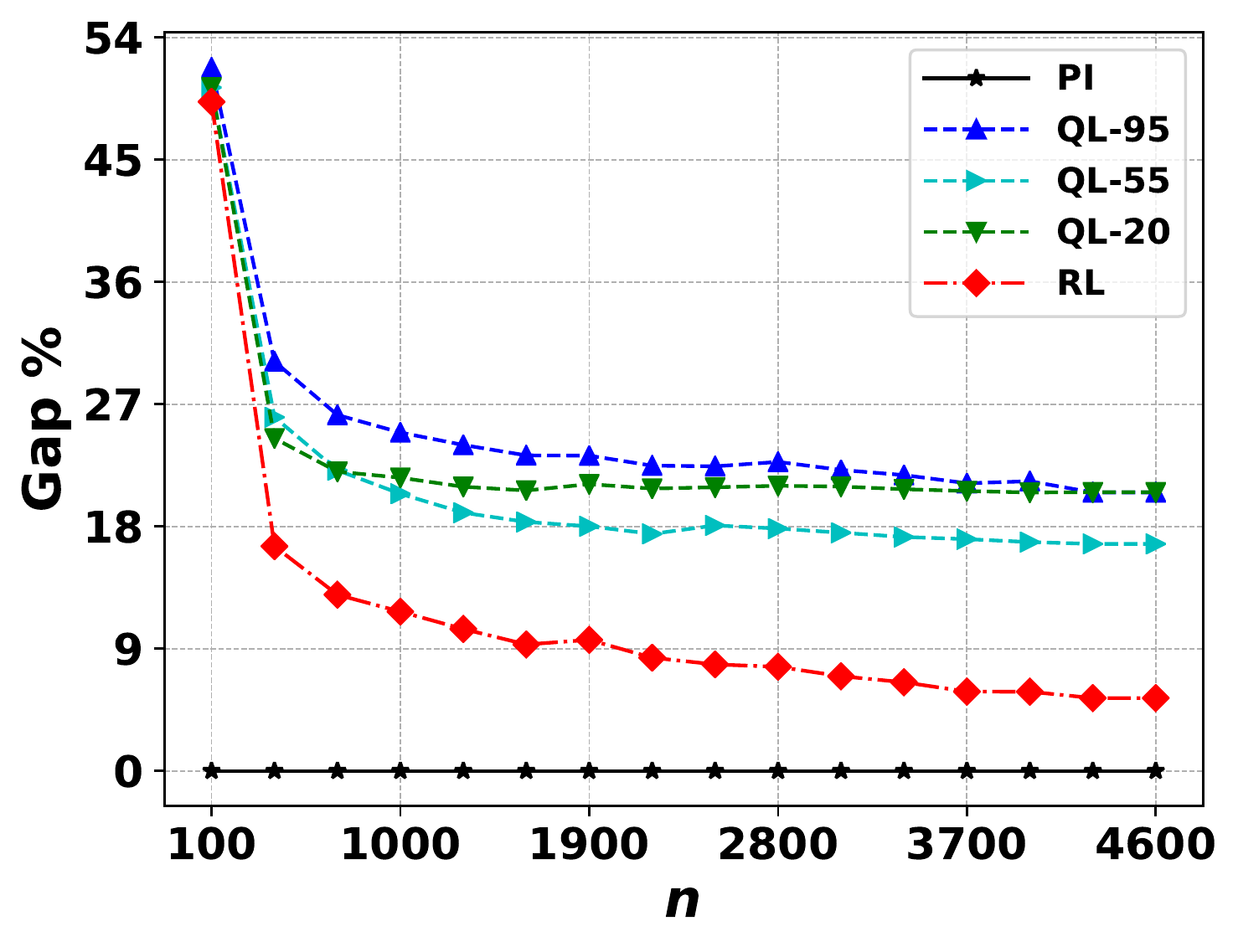}
    }
    \hfill
    \subfloat[Acceptance Rate]{
        \hspace{-4mm}
        \includegraphics[width=0.5\linewidth]{./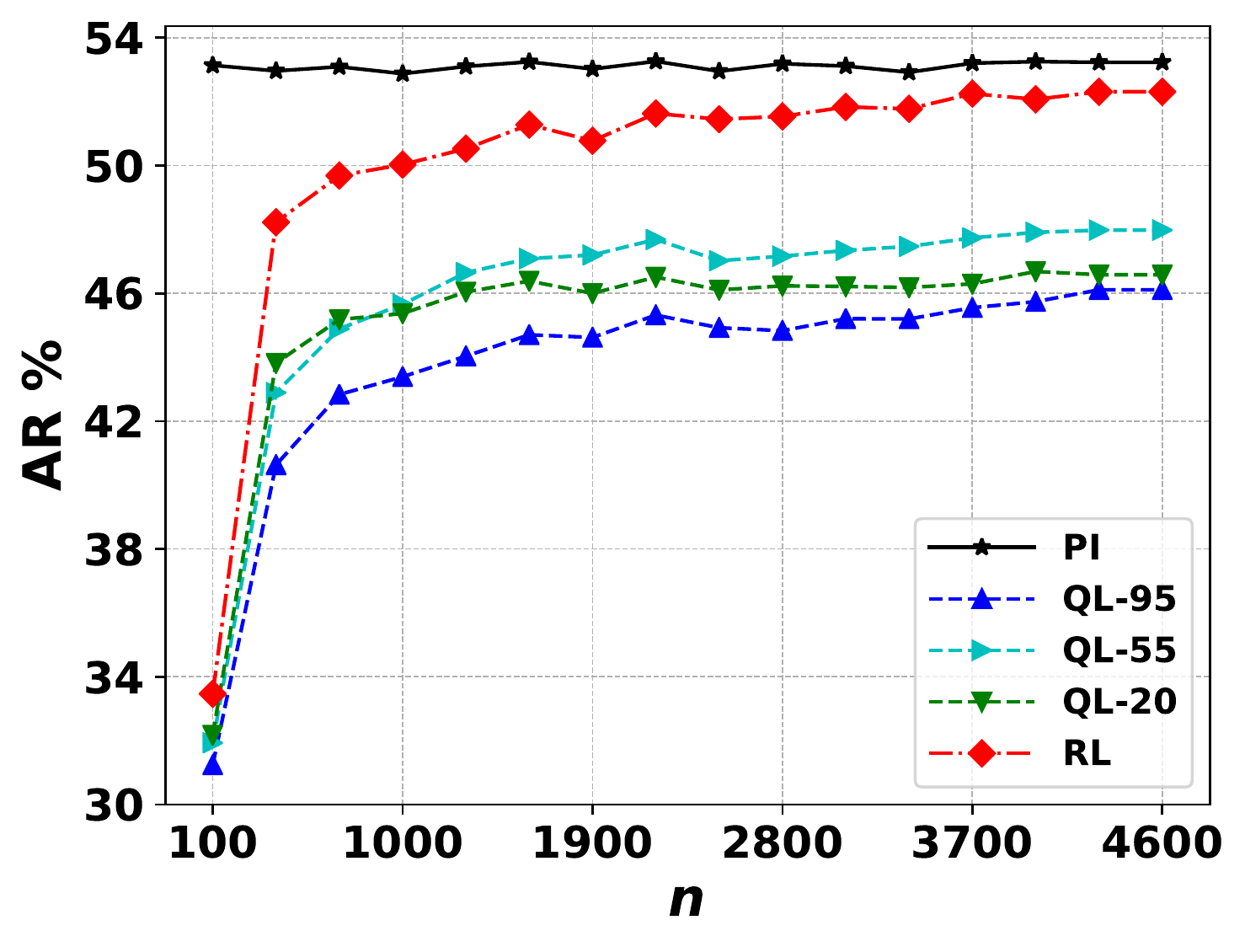}
        \hspace{-3mm}
    }
\vspace{-1mm}
\caption{Optimality gap and acceptance rate of the reinforcement learning algorithms with respect to the number of learning episodes.}
\label{fig:EpisodeResults}
\vspace{-3mm}
\end{figure}

\subsection{Resource Management Efficiency}
\vspace{-1mm}
\label{sec:ResourceEffect}
As discussed, AC is responsible for the management of domains' resources, as it determines the deployment domain for each NS request. In this section, we evaluate the performance of the algorithms in this respect. More specifically, 
the default values of the parameters $\bar{\bm{C}}^{l}$ and $\theta$ in Table \ref{tab:Simulation} are respectively replaced by $\eta_{C^{l}} \times \bar{\bm{C}^{l}}$ and $1 + \eta_{\theta}$, and the performance metrics are reported with respect to $\eta_{C^{l}}$ and $\eta_{\theta}$\footnote{In our simulations, the results of scaling $\bar{\bm{C}}^{p}$ are similar to the results of scaling the parameter $\theta$, which are omitted due to space limit.}.

\begin{figure}[t]
\vspace{-3mm}
    \subfloat[Optimality Gap]{
        \hspace{-3mm}
        \includegraphics[width=0.5\linewidth]{./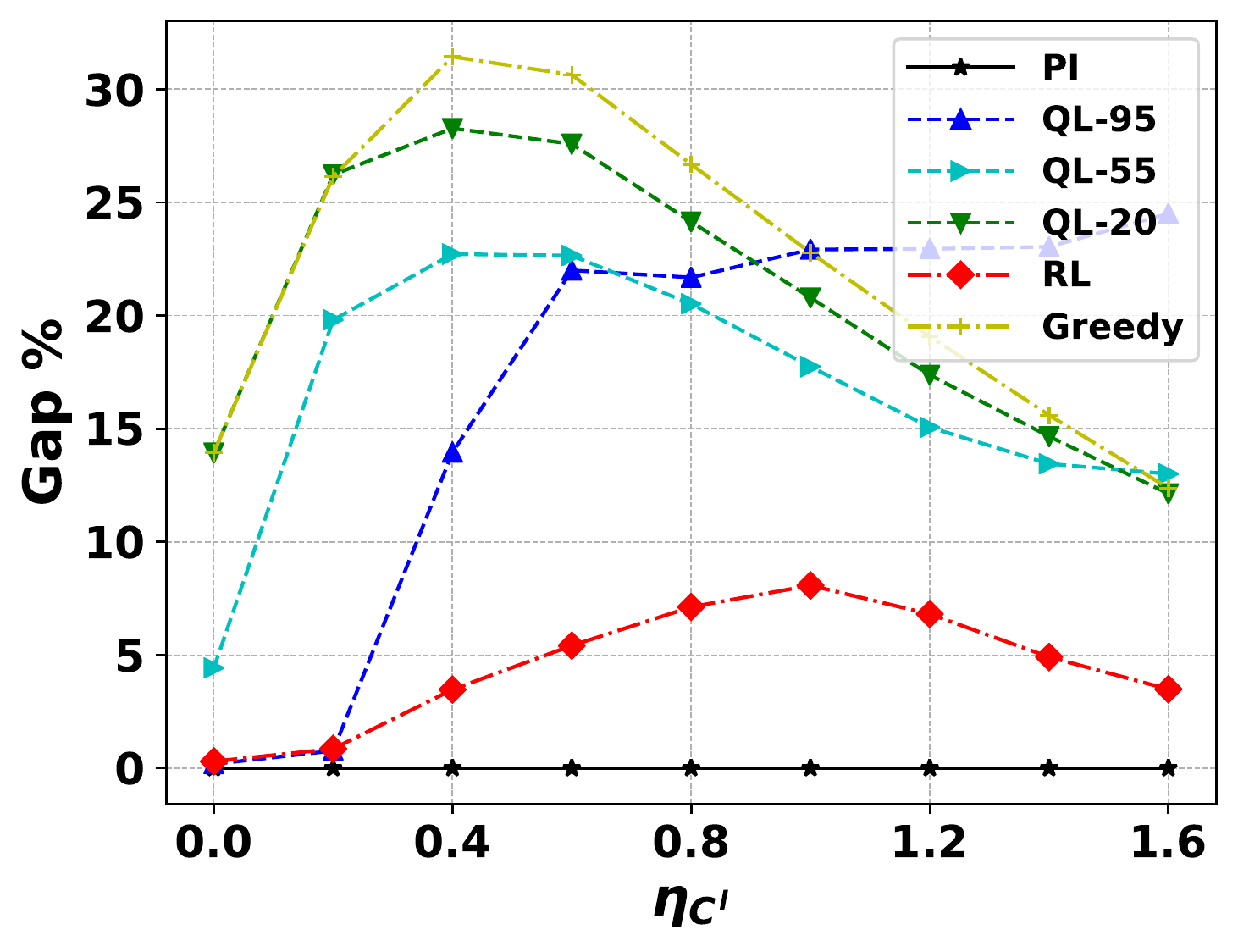}
        \label{fig:LocalCapacity:Gap}
    }
    \hfill
    \subfloat[Acceptance Rate]{
        \hspace{-4mm}
        \includegraphics[width=0.5\linewidth]{./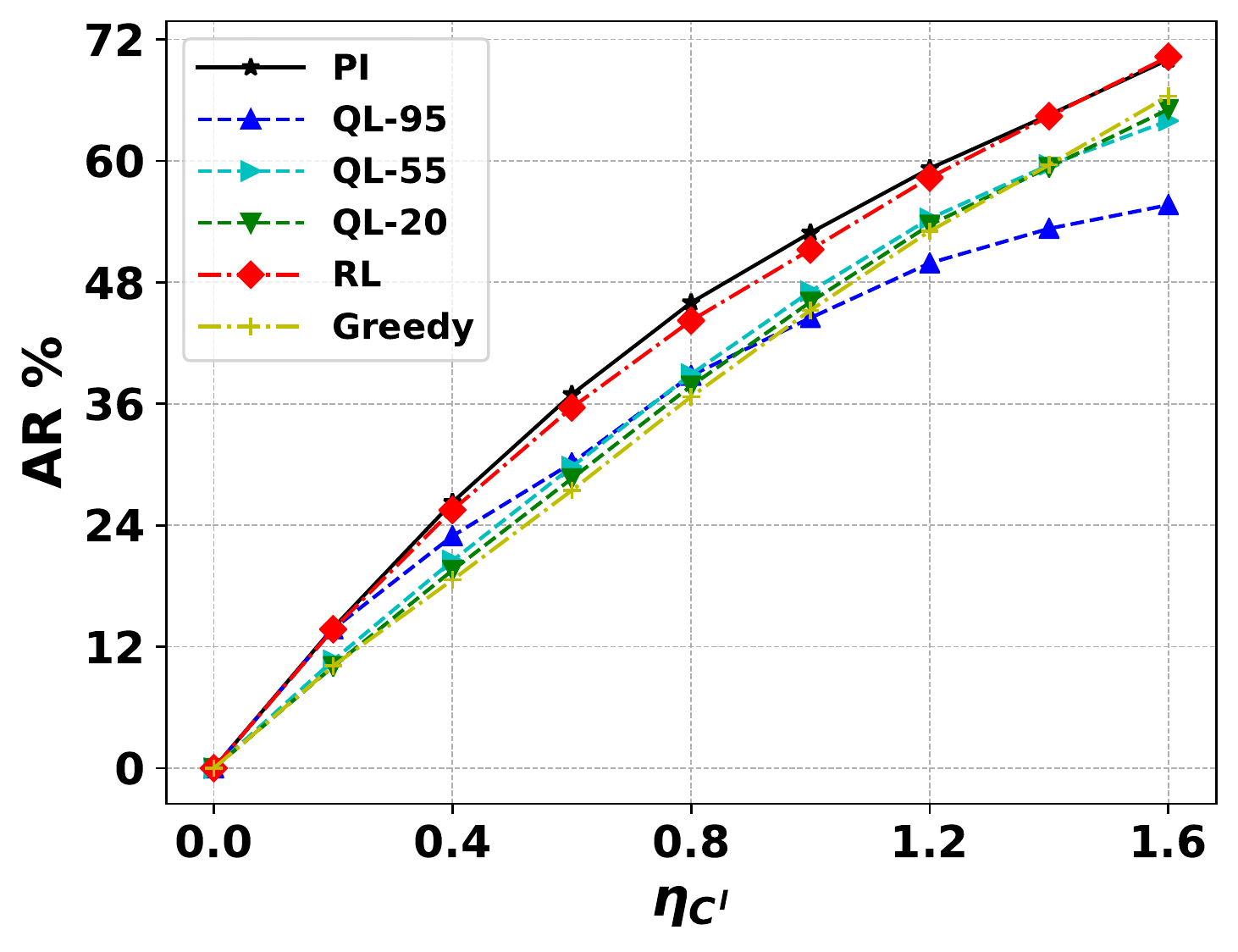}
        \label{fig:LocalCapacity:AR}
        \hspace{-3mm}
    }
\vspace{-1mm}    
\caption{Optimality gap and acceptance rate of the algorithms with respect to the scaling factor of CD's capacity.}
\label{fig:LocalCapacityResults}
\vspace{-1mm}
\end{figure}

The optimality gap in Figure \ref{fig:LocalCapacity:Gap} shows that R-Learning can efficiently utilize the resources; however, the performance of Q-Learning depends on the discount factor $\gamma$, and not a single value is the best setting for all the cases, which justifies the corollaries of Theorem \ref{theo:Gamma}. The optimality gap curves (except \textsf{QL-95}) are concave 
because in the case of very small $\eta_{C^{l}}$, the resources of the CD are scarce, so even the optimal policy by \textsf{PI} does not 
provide a considerably larger profit than other policies, i.e., all the policies are almost the same. In the case of very large $\eta_{C^{l}}$, there are sufficient resources in the CD, so  sub-optimal decisions by the practical algorithms do not yield a significant loss of average profit. Note that increasing the CD capacity by making $\eta_{C^{l}}$ bigger provides increases NS deployment opportunities in the consumer domain, which entails an increasing $AR$, as shown in Figure \ref{fig:LocalCapacity:AR}. \textsf{QL-95} does not use this opportunity because as stated by Theorem \ref{theo:Gamma}, it prefers \textsf{delegate} over \textsf{accept} due to the large $\gamma$.

The performance of the algorithms with respect to $\eta_{\theta}$ is shown in Figure \ref{fig:ThresholdResults}. Increasing $\eta_{\theta}$
represents more available resources in the PD, and consequently leads to a higher delegation rate, as shown in Figure \ref{fig:Threshold:FR}.
The (almost) constant optimality gap of \textsf{RL} shows that the algorithm skillfully manages PD resources considering the delegation (overcharged) costs.
The greedy policy cannot exploit the resources efficiently to improve the average profit. Again, the performance of Q-Learning depends on the discount factor $\gamma$.

\begin{figure}[t]
\vspace{-3mm}
    \subfloat[Optimality Gap]{
        \hspace{-3mm}
        \includegraphics[width=0.5\linewidth]{./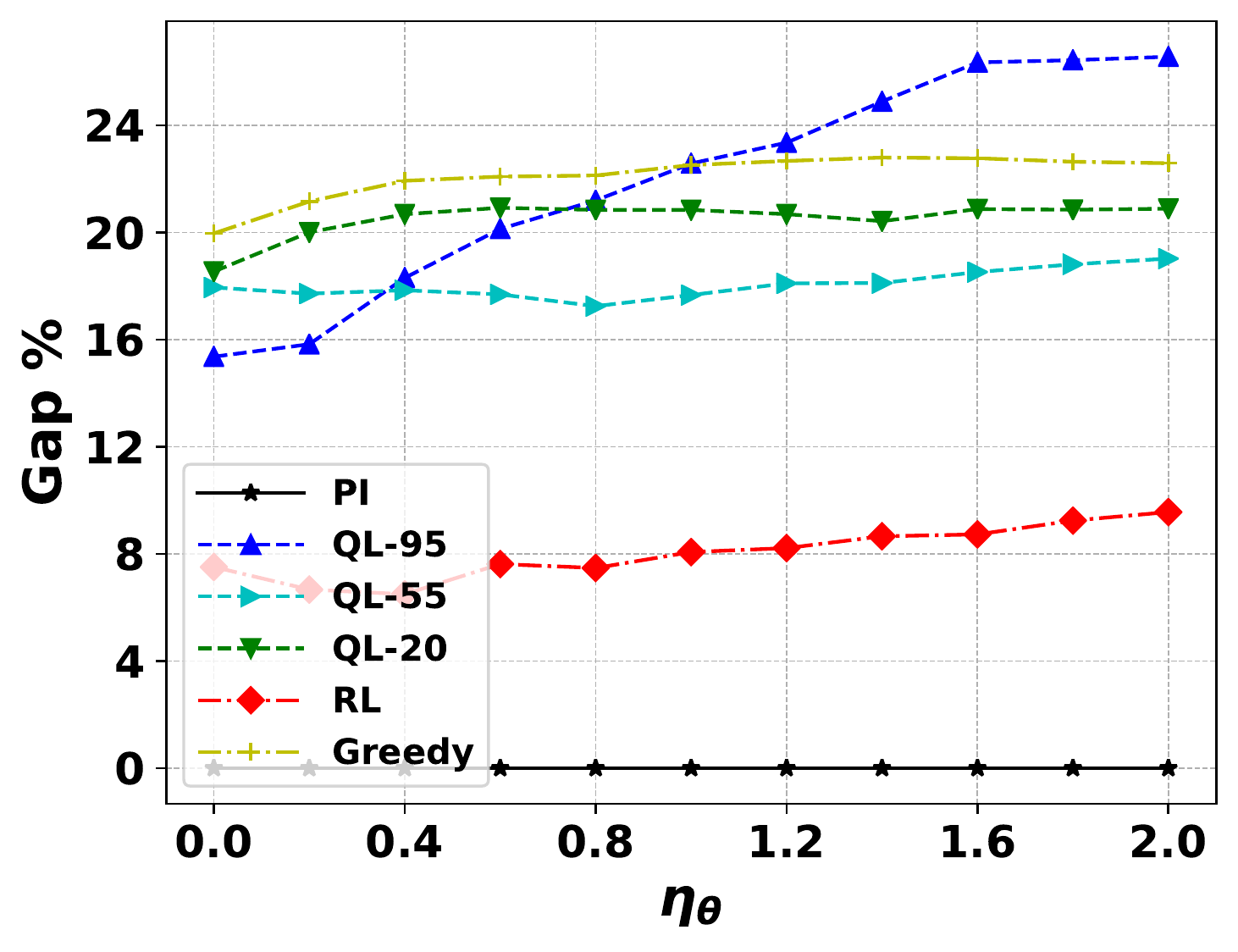}
    }
    \hfill
    \subfloat[Delegation Rate]{
        \hspace{-4mm}
        \includegraphics[width=0.5\linewidth]{./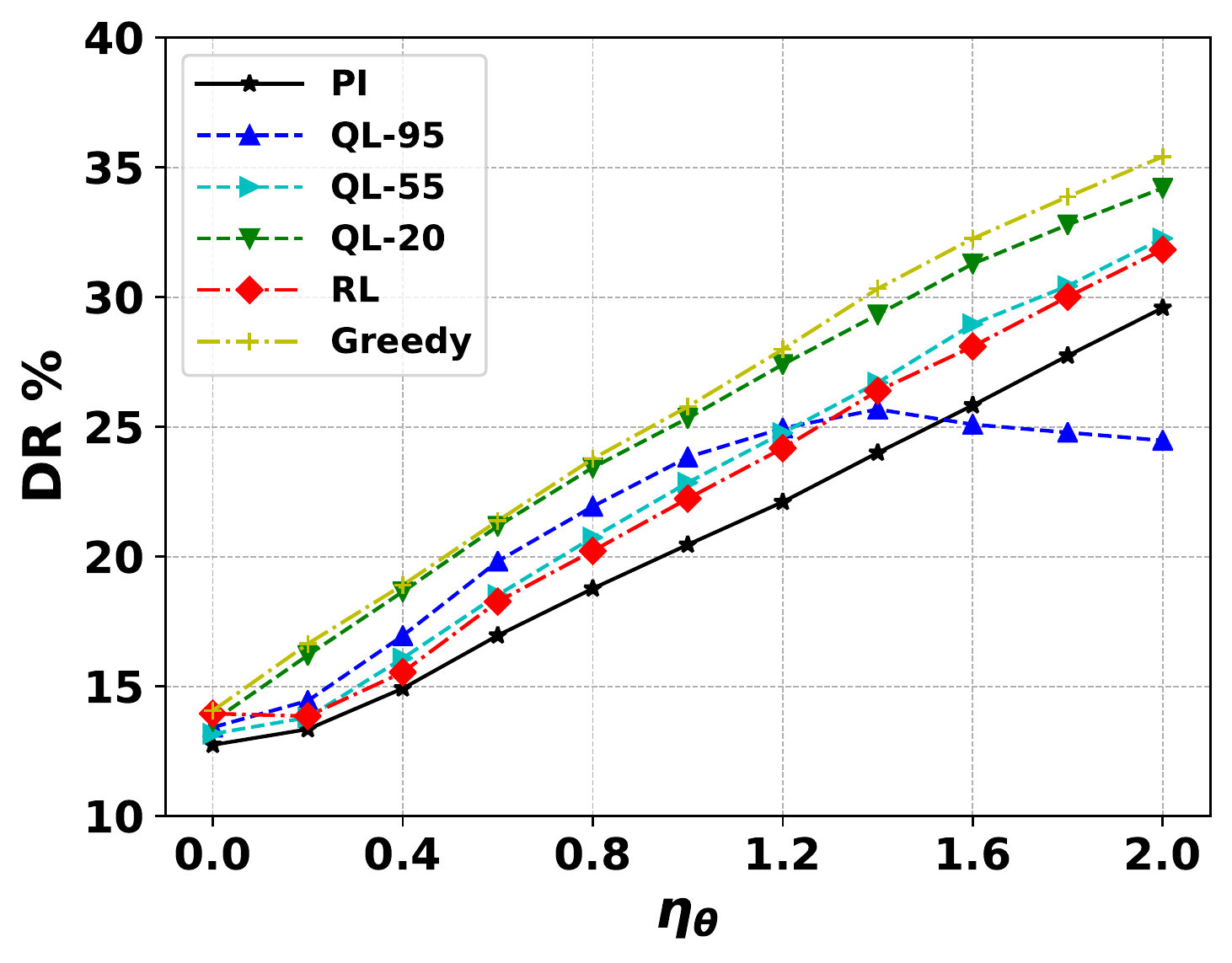}
        \label{fig:Threshold:FR}
        \hspace{-3mm}
    }
\vspace{-1mm}    
\caption{The  optimality gap and the delegation rate of the algorithms with respect to the reject threshold (${\theta_{k}} = 1 + \eta_{\theta}$).}
\label{fig:ThresholdResults}
\vspace{-3mm}
\end{figure}

\vspace{-3mm}
\subsection{Cost-Effectiveness}
\vspace{-1mm}
\label{sec:CostEffect}
In the AC-MPSD problem, the service delegation cost, which is determined by $\bm{\Sigma}$ and $\bm{\Omega}$, directly influences the average profit. In this section, we investigate the performance of the algorithms in this respect. To this end, the default value of $\bm{\Omega}$ in Table \ref{tab:Simulation} is scaled as $\eta_{\omega} \times \bm{\Omega}$ and the algorithms are evaluated with respect to it\footnote{In our simulations, the results of scaling the delegation fee $\bm{\Sigma}$ are similar to the scaling of $\bm{\Omega}$; they are omitted due to space limit.}. The results are shown in Figure \ref{fig:OverchargeResults}. Increasing $\eta_{\omega}$ incurs in higher service delegation costs, and consequently, decreases the delegation rate, as seen in Figure \ref{fig:Overcharge:FR}. Similar to the previous results, the optimality gap shows the superiority of R-Learning in taking the cost into account, as well as the dependency of Q-Learning performance on $\gamma$.

\begin{figure}[t]
\vspace{-3mm}
    \subfloat[Optimality Gap]{
        \hspace{-3mm}
        \includegraphics[width=0.5\linewidth]{./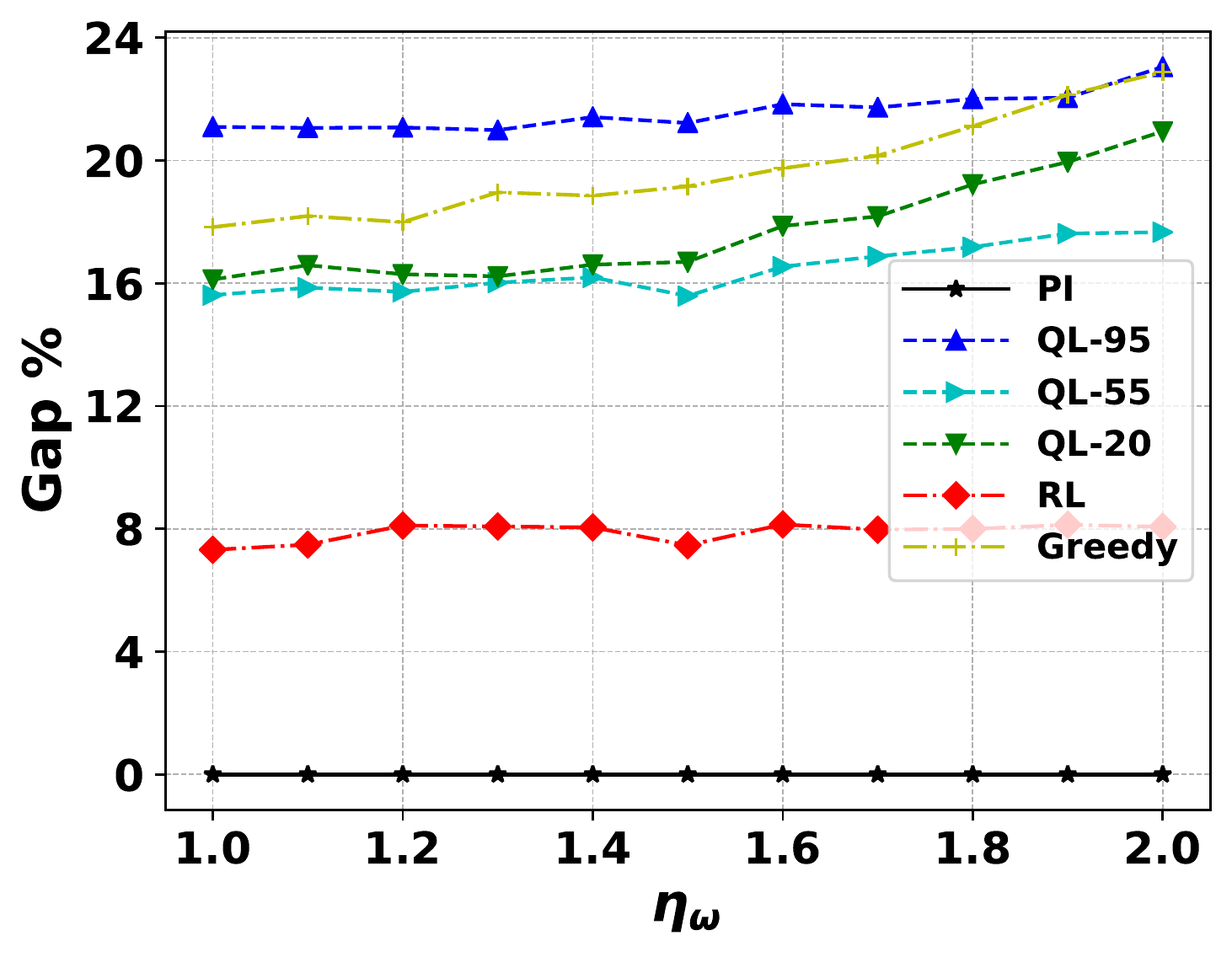}
    }
    \hfill
    \subfloat[Delegation Rate]{
        \hspace{-4mm}
        \includegraphics[width=0.5\linewidth]{./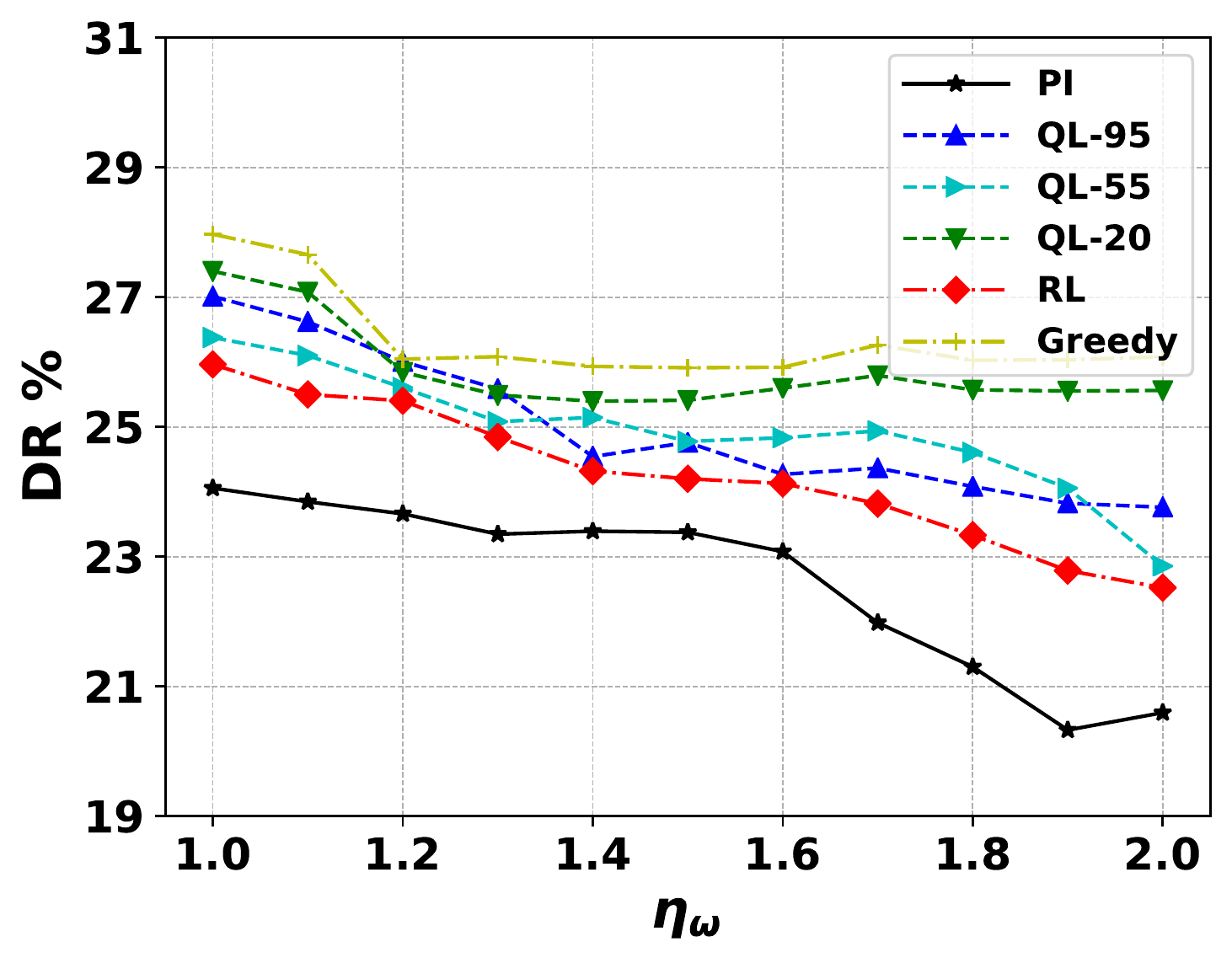}
        \label{fig:Overcharge:FR}
        \hspace{-3mm}
    }
\vspace{-1mm}    
\caption{Optimality gap and delegation rate of the algorithms with respect to the overcharging scaling factor (${\omega_{i}} = \eta_{\omega} \times {\omega}$).}
\label{fig:OverchargeResults}
\vspace{-1mm}
\end{figure}

The results presented in this section, show that R-Learning is a near-optimal solution that surpasses the other considered approaches; and also that there is not a fixed optimal value of $\gamma$ in the Q-Learning algorithm as stated by Theorem \ref{theo:Gamma}.

\vspace{-3mm}
\subsection{Experimental Evaluation Results}
\vspace{-1mm}
\label{sec:experiments}
This subsection presents a quantitative evaluation of the performance of the presented AC algorithms in a realistic testbed environment (EXTREME Testbed) using the 5Growth platform \cite{Xi_commag_2021}. The modular and flexible architecture of the 5Growth platform allows the straightforward integration of various AC techniques. 
Each AC policy, which is obtained by offline training (for \textsf{RL}) or computing (for \textsf{PI} and \textsf{Greedy}), is embedded as an external (containerized) module in the platform. It interacts through a well-defined REST API with the Service Orchestrator (5Gr-SO) of the platform, which is the architectural entity in charge of coordinating the end-to-end orchestration and lifecycle management of NSs in single- and multi-administrative domain scenarios. During the NS instantiation process, the Service Orchestration Engine (SOE) block of the 5Gr-SO contacts the AC module to determine the most suitable domain according to the AC policy (e.g., PI, RL, or Greedy). For that, according to the definition of the state in Equation \eqref{eq:State}, the SOE provides the resource characteristics of the NS to be instantiated, the number of active instances of each NS type in each domain, and the amount of available resources in each domain. It is worth mentioning that 
when establishing the federation contract, the CD and PD domains agree on the type of NSs they can delegate and the amount of reserved federation quota; and so, the SOE can easily derive the information required by the AC module from the 5Gr-SO ETSI NFV databases keeping track of the system status.

The setup used in this experimentation 
is composed of two interconnected domains running their own instance of the 5Growth platform and each domain having an underlying infrastructure consisting of an NFVI-Point of Presence (PoP). The NFVI-PoP of the CD has 12 CPU cores, 32GB of RAM, and 1TB of Storage, and the PD agrees to federate 6 CPU cores, 12 GB of RAM, and 500GB of storage with the CD. Due to the resource limitations in the testbed, 
for the experimental evaluations, we set $\bm{\Theta} = \bm{1}$, and we use two types of services, as shown in Table~\ref{tab:NS_chars}, where $\bm{c}_{i} = [$\#CPU, RAM (GB), Storage (GB)$]$.

\begin{table}[t] 
\begin{center}
\caption{Service types for experimental testbed evaluations}
\vspace{-2mm}
\label{tab:NS_chars}
\centering
\small\addtolength{\tabcolsep}{0pt}
\scalebox{0.9}{
\begin{tabular}{|c|c|c|c|c|c|}
\hline
$i$ & $\bm{c}_{i}$  & $\lambda_{i}$ (req/s) & $\mu_{i}$ (req/s) & $r_{i}$ & $\sigma_{i}$ \\ 

\hline
1 & [2, 2, 20] & $1/300$ & $1/800$ & 95 & 90 \\
\hline
2 & [1, 1, 15] & $1/345$ & $1/3000$ & 40 & 10 \\ 
\hline
\end{tabular}
}		
\end{center}
\vspace{-3mm}
\end{table}

Figure~\ref{fig:ExperimentResults} represents the total profit, $\sum_{i \in \mathcal{I}} \big(\sum_{\delta_{i} \in \mathcal{L}} r_{i} + \sum_{\delta_{i} \in \mathcal{F}} (r_{i} - \Delta(\delta_{i}))\big)$, obtained by the \textsf{PI}, \textsf{RL}, and \textsf{Greedy} solutions in ten independent experiments, each one covering the arrival and departure of NS requests
during a period of 5 hours, and also the average results. These results are consistent with the simulation results and show that the proposed average reward RL solution outperforms the greedy policy and provides near-optimal performance; i.e., in all the experiments, $\textsf{RL}$ obtains a higher total profit than the greedy approach. Thus, it shows that \textsf{RL} can efficiently learn the heterogeneity in network service types and select the appropriate deployment domain accordingly. 
\vspace{1mm}

However, contrarily to the simulation results, the gap between \textsf{RL} and \textsf{PI} is negligible, and in some experiments \textsf{RL} even outperforms \textsf{PI}. The detailed analysis of the traces of the experiments compared with the simulations, showed that the main reason is the \emph{non-zero} service instantiation and termination times. The MDP, the \textsf{PI} algorithm, and also the simulation environment are based on the assumption that the action taken by the agent is effective \emph{immediately}, i.e., before the next arrival/departure event is applied in the environment\footnote{Without this assumption, deriving the transition probabilities is not tractable, as it needs to consider (theoretically) infinite arrival/departure events while the environment is transiting from $s$ to $\tilde{s}$.}. This assumption is translated into \emph{zero} service instantiation and termination times, which does not correspond to real systems. For instance, in our experimental evaluations the time required to perform such lifecycle management operations ranges from 27 to 40 seconds. 
This implies that the states visited by the agent may not follow the probability distribution $\mathfrak{P}(s,a,s')$ obtained by Algorithm \ref{alg:Pr}; so, the policy by \textsf{PI}, which is based on the probabilities, is not necessarily optimal in the practice. Figure \ref{fig:ExperimentDiffs} shows the difference between the total profit of each policy in the experimental tests vs. the simulations using the same set of NS requests. As seen, the profit of \textsf{PI} in the testbed is always lower than the corresponding simulation, hence confirming the above analysis. Furthermore, the average performance loss of \textsf{RL} is less than that of \textsf{PI}, which represents another advantage of \textsf{RL} as a practical solution for AC-MPSD.

\begin{figure}[t]
\vspace{-3mm}
\begin{center}
    \includegraphics[width=0.95\linewidth]{./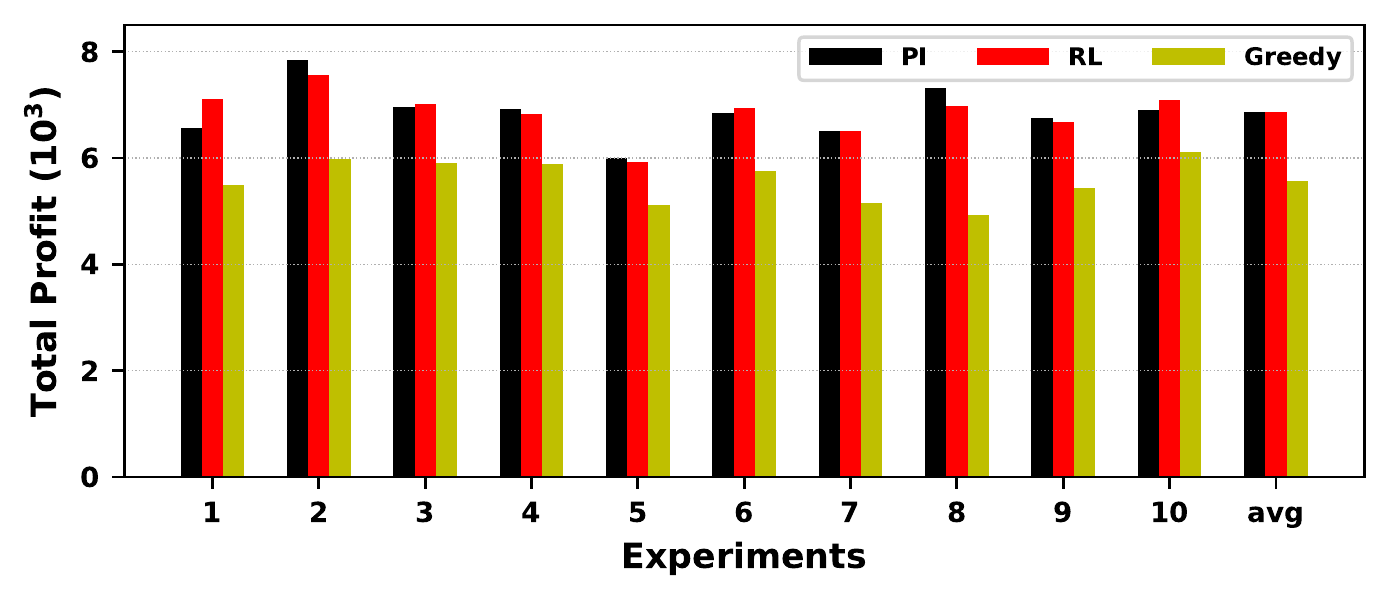}
\vspace{-5mm}
\caption{The total profit of the AC policies in the experiments in the testbed}
\label{fig:ExperimentResults}
\end{center}
\vspace{-3mm}
\end{figure}

\begin{figure}[t]
\vspace{-3mm}
\begin{center}
    \includegraphics[width=0.95\linewidth]{./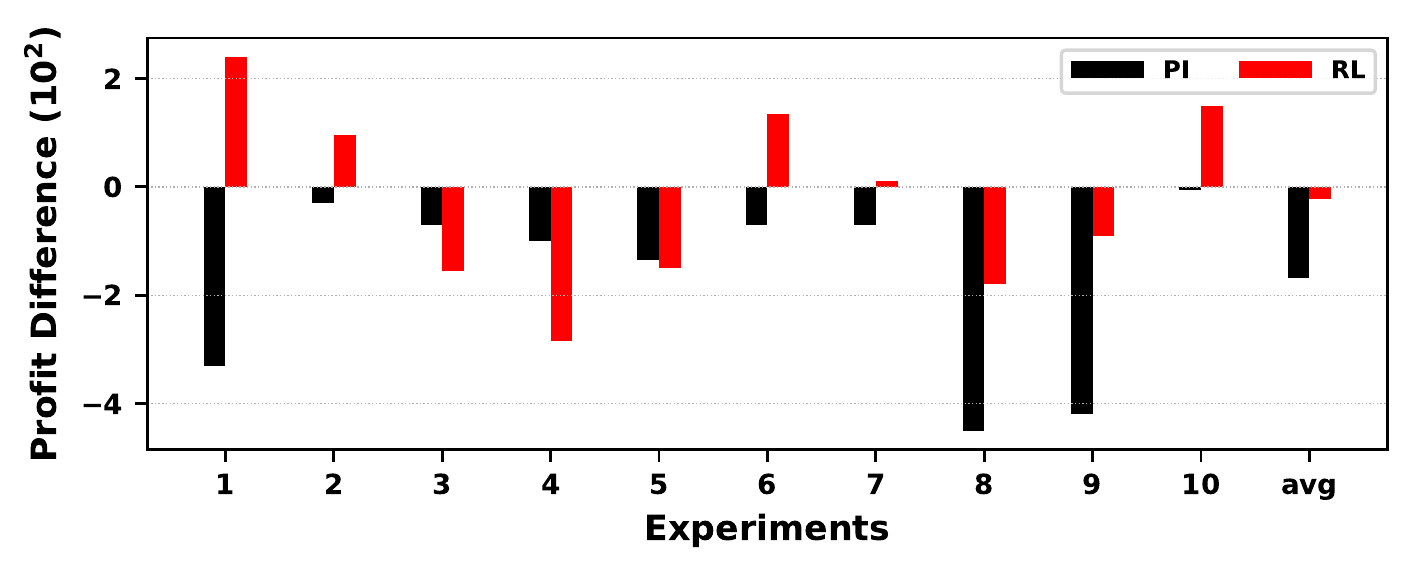}
\vspace{-5mm}    
\caption{Difference between total profit in experiments vs. simulations}
\label{fig:ExperimentDiffs}
\end{center}
\vspace{-3mm}
\end{figure}

\vspace{-1mm}
\section{Conclusions} 
\label{sec:Conclusion}
\vspace{-1mm}
We studied admission control for the multi-provider NFV service delegation problem, where the consumer domain 
can delegate provisioning of a service completely to the provider domain
subject to the federation contract. 
AC determines the deployment domain for each NS request, if it accepts the request. 
The theoretical optimal solution under ideal assumptions is obtained by modeling the problem as an MDP solved through 
the Policy Iteration algorithm. To tackle the problem in practice, where transition probabilities are not known and service lifecycle management operations take \emph{non-zero} time, we utilized RL. We showed, both analytically and via simulations, that the well-known Q-Learning algorithm, that optimizes the expected discounted return, is susceptible to the discount factor, whose optimal value 
cannot be determined in advance. We proposed the R-Learning algorithm that directly optimizes the \emph{average reward}. Experimental evaluations using the 5Growth platform as well as the simulation results showed R-Learning efficiently manages the resources of the domains, and skillfully considers the cost of delegation that leads to a near-optimal solution (with an optimality gap lower than 9\%) outperforming the Q-Learning and the greedy policies.
\vspace{-1mm}

\appendix
\vspace{-1mm}
In the exploration strategies used in Q-Learning, e.g., $\epsilon$-greedy, Pr$(a \, | \, s)$ is an increasing function of $Q[s,a]$; therefore, to prove the theorem, we need to show that Pr$(Q[s,\mathsf{delegate}] > Q[s, \mathsf{accept}]) \propto \gamma$. By the Q-Learning update equation, we have
\begin{eqnarray*}
Q[s,a] = \bar{Q}[s,a] + \alpha \Big(\mathfrak{R}(s,a) + \gamma \max_{a'} \bar{Q}[s',a'] - \bar{Q}[s,a]\Big),
\end{eqnarray*}
where for the sake of simplicity of discussion, the $Q$ values before updating are denoted by $\bar{Q}$ in the right hand side. 

Without loss of generality, assume that $\bar{Q}[s,\mathsf{delegate}] = \bar{Q}[s,\mathsf{accept}]$; and define $Q'[s,a] = \max_{a'} \bar{Q}[s',a']$, and 
\begin{align*}
\begin{split}
    g(\gamma) {}& = Q[s,\mathsf{delegate}] - Q[s,\mathsf{accept}].
\end{split}
\end{align*}

For the first part of the theorem where $\gamma = 0$, we have 
\begin{equation*}
 g(0) =  \alpha \Big(\mathfrak{R}(s,\mathsf{delegate}) - \mathfrak{R}(s,\mathsf{accept})\Big) = - \alpha \Delta(\delta) < 0,
\end{equation*}
that implies $Q[s,\mathsf{delegate}] < Q[s,\mathsf{accept}]$; and consequently $f(0) \leq 0$ that proves the first part.

For the second part, where $\gamma > 0$, we need to show $g(\gamma) > 0$ and $g(\gamma) \propto \gamma$. As it is seen,
\begin{align*}
\begin{split}
    g(\gamma) {} & = \alpha \Big(\mathfrak{R}(s,\mathsf{delegate}) - \mathfrak{R}(s,\mathsf{accept}) + \\ 
        & \ \ \ \ \gamma  (Q'[s,\mathsf{delegate}] -  Q'[s,\mathsf{accept}]) \Big) 
\end{split}
\end{align*}
so if $Q'[s,\mathsf{delegate}] \gg Q'[s,\mathsf{accept}]$, then the second term of $g(\gamma)$ is a large positive value, and consequently both conditions hold. It is easy to show that there are such states. Let 
$s = ({\bm{C}^l}, {\bm{C}}_{\theta}^p, \bm{l}, \bm{f}, \bm{e}_{i})$, so $Q'[s,\mathsf{delegate}]$ and $Q'[s,\mathsf{accept}]$ are respectively the expected discounted return starting from $s'_{\mathsf{delegate}}$ and  $s'_{\mathsf{accept}}$ where
\begin{equation*}
s'_{\mathsf{delegate}} = ({\bm{C}^l}, {\bm{C}}_{\theta}^p-\bm{c}_{i}, \bm{l}, \bm{f}+\bm{e}_{i}, \bm{d})
\end{equation*} 
\begin{equation*}
s'_{\mathsf{accept}} = ({\bm{C}^l}-\bm{c}_{i}, {\bm{C}}_{\theta}^p, \bm{l}+\bm{e}_{i}, \bm{f}, \bm{d})
\end{equation*} 

Now, assume there are a number of requests of type $\delta_{j}$ where $\bm{c}_{j} > {\bm{C}^l}-\bm{c}_{i}$ with very short life-time. None of them can be deployed in the CD in state $s'_{\mathsf{accept}}$ (and must be delegated) while they can be deployed in CD in state $s'_{\mathsf{delegate}}$. Therefore, we have
\begin{equation*}
Q'[s, \mathsf{accept}] = \sum_{t=0} \gamma^{t} (r_{j} - \Delta(\delta_{j}))
\end{equation*}
\begin{equation*}
Q'[s, \mathsf{delegate}] = \sum_{t=0} \gamma^{t} r_{j}
\end{equation*}
This implies that $Q'[s, \mathsf{delegate}] \gg Q'[s, \mathsf{accept}]$; and consequently $g(\gamma) > 0$ and $g(\gamma) \propto \gamma$, which proves the theorem.



\bibliographystyle{IEEEtran}    
\bibliography{main}

\end{document}